  \let\oldparagraph\paragraph
  \renewcommand{\paragraph}{
    \@ifstar
      \xxxParagraphStar
      \xxxParagraphNoStar
  }
  \newcommand{\xxxParagraphStar}[1]{\oldparagraph*{#1}\mbox{}}
  \newcommand{\xxxParagraphNoStar}[1]{\oldparagraph{#1}\mbox{}}
  \let\oldsubparagraph\subparagraph
  \renewcommand{\subparagraph}{
    \@ifstar
      \xxxSubParagraphStar
      \xxxSubParagraphNoStar
  }
  \newcommand{\xxxSubParagraphStar}[1]{\oldsubparagraph*{#1}\mbox{}}
  \newcommand{\xxxSubParagraphNoStar}[1]{\oldsubparagraph{#1}\mbox{}}
\patchcmd\longtable{\par}{\if@noskipsec\mbox{}\fi\par}{}{}
\newsavebox\pandoc@box
\newcommand*\pandocbounded[1]{
  \sbox\pandoc@box{#1}%
  \Gscale@div\@tempa{\textheight}{\dimexpr\ht\pandoc@box+\dp\pandoc@box\relax}%
  \Gscale@div\@tempb{\linewidth}{\wd\pandoc@box}%
  \ifdim\@tempb\p@<\@tempa\p@\let\@tempa\@tempb\fi
  \ifdim\@tempa\p@<\p@\scalebox{\@tempa}{\usebox\pandoc@box}%
  \else\usebox{\pandoc@box}%
  \fi%
}
\def\fps@figure{htbp}
\providecommand{\tightlist}{%
  \setlength{\itemsep}{0pt}\setlength{\parskip}{0pt}}
\newcommand{\iid}{\stackrel{\text{i.i.d.}}{\sim}}
  \renewcommand*\contentsname{Table of contents}
  \newcommand\contentsname{Table of contents}
  \renewcommand*\listfigurename{List of Figures}
  \newcommand\listfigurename{List of Figures}
  \renewcommand*\listtablename{List of Tables}
  \newcommand\listtablename{List of Tables}
  \renewcommand*\figurename{Figure}
  \newcommand\figurename{Figure}
  \renewcommand*\tablename{Table}
  \newcommand\tablename{Table}
\theoremstyle{plain}
\newtheorem{theorem}{Theorem}[section]
\theoremstyle{plain}
\newtheorem{proposition}{Proposition}[section]
\theoremstyle{definition}
\newtheorem{definition}{Definition}[section]
\theoremstyle{remark}
\newtheorem{refremark}{Remark}[section]
\title{Weighted Asymptotically Optimal Sequential Testing}
\author{Soumyabrata Bose\thanks{\texttt{sbose@utexas.edu}} }
\author{Jay Bartroff\thanks{\texttt{bartroff@austin.utexas.edu}}}
\affil{
    Department of Statistics and Data Sciences \\
    University of Texas at Austin
}
\date{}
\begin{document}
\maketitle
\begin{abstract}
This paper develops a framework for incorporating prior information into
sequential multiple testing procedures while maintaining asymptotic
optimality. We define a weighted log-likelihood ratio (WLLR) as an
additive modification of the standard LLR and use it to construct two
new sequential tests: the Weighted Gap and Weighted Gap-Intersection
procedures. We prove that both procedures provide strong control of the
family-wise error rate. Our main theoretical contribution is to show
that these weighted procedures are asymptotically optimal; their
expected stopping times achieve the theoretical lower bound as the error
probabilities vanish. This first-order optimality is shown to be robust,
holding in high-dimensional regimes where the number of null hypotheses 
grows and in settings with random weights, provided that mild, 
interpretable conditions on the weight distribution are met.
\end{abstract}

\setstretch{2}
\section{Introduction}\label{introduction}

The sequential acquisition of data is a defining feature of modern
large-scale inference. In domains ranging from clinical trials \citep{Bartroff13} to
genomics \citep{Lin22} and online A/B testing \citep{Kohavi20}, data are not collected as a single
batch but accrue over time. This paradigm has fueled the development of
sequential multiple testing procedures, which aim to satisfy the dual
objectives of controlling a statistical error rate, such as the
family-wise error rate (FWER), while minimizing the expected sample size
or stopping time. A key theoretical
benchmark in this area is the concept of asymptotic optimality, where a
procedure's expected stopping time attains a fundamental
information-theoretic lower bound in the limit of vanishing error
probabilities \citep{song2017}.

While this sequential optimality framework provides a powerful theoretical
foundation, it has largely been developed under the assumption that, \textit{a-priori}, there is no differential importance, priority, or weighting of the hypotheses; in other words, that they are interchangeable. This assumption is often
misaligned with practice in applications, where substantial scientific structure or external information
is available to prioritize certain hypotheses over others. As \citet[][p.~509]{Genovese06} write, \textit{``All null hypotheses are not created equal.''}

Examples of scientific situations where naturally-occurring structure, priority,  or prior information motivates weights among the hypotheses include the following.
\begin{itemize}
\item
  In genomics, researchers conducting RNA-seq experiments to identify
  differentially expressed genes often have access to prior information
  from genome-wide association studies (GWAS) or pathway databases like
  KEGG. This information can be used to assign higher weights to genes
  with known links to the phenotype of interest, thereby prioritizing
  biologically plausible candidates \citep{korthauer19}.
\item
  In platform clinical trials, which evaluate multiple treatments
  simultaneously, prior evidence from pre-clinical studies or Phase I/II
  trials may suggest that certain drug candidates are more likely to be
  effective. Weighting these hypotheses can accelerate decisions for the
  most promising treatments, an approach with clear ethical and economic
  benefits \citep{berry15}.
\item
  In neuroscience, studies using fMRI to map brain activity may
  prioritize specific regions of interest (ROIs) based on known anatomical brain structure from pilot data or previous studies, assigning higher weights to voxels within
  those ROIs \citep{Ignatiadis16}.
\end{itemize}

In these examples, weights are naturally motivated as ``\textit{positive constants indicating the importance of the hypothesis,}'' as described by \citet{Holm79} who incorporated weights into his multiple testing procedure. Larger values of Holm's weights increase the chance of rejecting null hypotheses in his procedure. We adopt this as our working definition of weights, and consider both deterministic and random regimes for weights below in Sections~\ref{sec-methods}-\ref{sec-random-w}.

The examples above all  have explicit weights which the researchers would like to incorporate into a testing procedure.  In other settings researchers may not have explicit weights in mind but implicitly end up using a weighted procedure. For example, commonly-used Bonferroni-type multiple testing procedures compare each $p$-value $P_1,\ldots, P_J$ with a threshold and the $j$th null hypothesis is rejected if $P_j\le \alpha_j$ where $\sum_{j=1}^J \alpha_j=\alpha$, the desired overall familywise error rate. Taking weight $W_j=\alpha_j/(\alpha/J)$, this comparison is of course equivalent to comparing each $|\log P_j| +\log W_j$ with the same threshold $|\log(\alpha/J)|$.  \citet{Chen14} give an example of this selective ``$\alpha$ allocation'' in biomarker clinical trials where more biologically plausible hypotheses are assigned larger $\alpha_j$.  Another example of using implicit weights is in Bayesian testing where, for example in testing simple hypotheses $X\sim f_0$ vs.\ $X\sim f_1$, it is well known that the posterior probability~$P(X\sim f_1|X)$ is an increasing function of $\log (f_1(X)/f_0(X)) +\log W$, where $W$ is the prior odds of $f_1$. This weighted log-likelihood ratio is the statistic we utilize in our approach below (see \eqref{W.LLR}). Although primarily frequentist, in this sense our approach can be considered a way to incorporate Bayesian prior information without changing the error measure.

Regardless of how weights are motivated, in the fixed-sample setting, methods for incorporating such information
through hypothesis-specific weights are well-established and have been
shown to increase statistical power \citep{Benjamini97, Genovese06}.
However, the implications of weighting for asymptotically optimal
sequential procedures are unknown. It is unclear whether the benefits of
weighting can be achieved without compromising the guarantee of
asymptotic efficiency.

This paper addresses this gap by developing a formal framework for
weighted asymptotically optimal sequential testing. We introduce a
modified statistic, the weighted log-likelihood ratio (WLLR), to
integrate prior information directly into the evidence accumulation
process. Building on this, we propose two procedures: the Weighted Gap
Procedure for a known signal count and the Weighted Gap-Intersection
Procedure for a bounded signal count. We prove that both methods provide
strong FWE control. The central contribution of this work is to show
that they are asymptotically optimal: their expected stopping times
converge to the theoretical lower bound, demonstrating that weights
affect only lower-order terms. We further analyze the high-dimensional
regime where the number of hypotheses \(J\) grows, and a setting with
random weights, characterizing the conditions under which optimality is
preserved. Our results provide a principled methodology for leveraging
prior information to enhance the efficiency of sequential experiments.

The remainder of this paper is organized as follows.
Section~\ref{sec-setup-notations} formalizes the problem setup and
notation. Section~\ref{sec-background} provides a brief overview of
related work. In Section~\ref{sec-methods}, we introduce the weighted
sequential procedures and establish their FWE control.
Section~\ref{sec-ao} contains the core results on asymptotic optimality.
Section~\ref{sec-random-w} extends the analysis to the random weights
regime, and Section~\ref{sec-simulation} presents a simulation study to
evaluate the finite-sample performance of our methods.

\section{Setup and Notations}\label{sec-setup-notations}

We consider the problem of simultaneously testing \(J\) simple null
hypotheses against \(J\) simple alternatives. For each hypothesis
\(j \in [J] := \{1, \dots, J\}\), we observe an independent stream of
data, \(X^j = \{X_i^j, i=1, 2, \dots\}\), generated sequentially in
time. The probability measure governing the data stream \(j\), denoted
\(P^j\), is assumed to be one of two distinct possibilities: \(P_0^j\)
under the null or \(P_1^j\) under the alternative. The component testing
problem for each stream is thus \[
H_0^j: P^j = P_0^j \quad \text{vs.} \quad H_1^j: P^j = P_1^j.
\] The complete state of nature is described by the set of true
alternatives (signals), denoted by \(A \subseteq [J]\). The
complementary set, \(A^c = [J] \setminus A\), represents the set of true
nulls (noise). Throughout this work, we operate under the assumption of
prior information regarding the signal configuration, which may take the
form of either a known signal count, \(|A| = m\), or a known bound on
the signal count, \(|A| \in [l, u]\), the latter including the non-informative case $[l,u]=[0,J]$. Under the assumption of
independence across streams, the global probability measure
corresponding to a signal set \(A\) is the product measure
\(\mathbb{P}_A\) \[
\mathbb P_A = \prod_{j \in A} P_1^j \times \prod_{j \in A^c} P_0^j.
\] We let \(\mathbb E_A\) denote the expectation under this measure.

The fundamental statistic for this sequential problem is the
\emph{log-likelihood ratio} (LLR). For each stream \(j\) at time \(n\),
the LLR is defined with respect to the filtration
\(\mathcal{F}_n = \sigma(X_i^k: k \in [J], i \le n)\) as \[
\lambda^j(n) := \log \frac{d P_{1,n}^j}{d P_{0,n}^j},
\] where \(P_{i,n}^j\) is the restriction of the measure \(P_i^j\) to
\(\mathcal{F}_n\). We assume that the LLR process for each stream,
\(\{\lambda^j(n)\}_{n \ge 1}\), constitutes a random walk with
increments of finite variance. The drift of this walk is determined by
the true state of the hypothesis and is governed by the Kullback-Leibler
(KL) information, \(I_1^j := \mathbb E_1^j[\lambda^j(1)]\) and
\(I_0^j := -\mathbb E_0^j[\lambda^j(1)]\). Specifically, \[
\mathbb E_A[\lambda^j(1)] = 
\begin{cases} 
  I_1^j & \text{if } j \in A \\ 
  -I_0^j & \text{if } j \in A^c.
\end{cases}
\] We impose the standard condition that \(I_1^j > 0\) and \(I_0^j > 0\)
for all \(j \in [J]\), which ensures that evidence for the true
hypothesis accumulates over time and guarantees the eventual termination
of any well-posed sequential procedure.

The asymptotic performance of an optimal procedure is ultimately
dictated by the least informative streams within the signal and null
sets. Consequently, we define the worst-case rates of information
accumulation for a given signal configuration \(A\) as \[
\eta_1^A := \min_{k \in A} I_1^k \quad \text{and} \quad \eta_0^A := \min_{j \in A^c} I_0^j.
\]

A sequential multiple testing procedure is a pair \((T, D)\), where
\(T\) is a stopping time with respect to the filtration
\(\{\mathcal{F}_n\}_{n \ge 1}\), and \(D\) is an
\(\mathcal{F}_T\)-measurable decision rule specifying the set of
rejected hypotheses. We are interested in procedures that provide strong
control of the family-wise error rate (FWE) at pre-specified levels: Under true signal set~$A$, a procedure with decision rule~$D$ has Type~I and II FWE given by $\mathbb{P}_A(D \setminus A \neq \emptyset)$ and $\mathbb{P}_A(A \setminus D \neq \emptyset)$, respectively.  This leads to the following formal definition of the classes of
admissible procedures for our two settings.

\begin{definition}[Class of Admissible
Procedures]\protect\hypertarget{def-admissible-class}{}\label{def-admissible-class}

Let \(\alpha, \beta \in (0, 1)\) be the tolerable error probabilities.

\begin{enumerate}
\def\labelenumi{\arabic{enumi}.}
\tightlist
\item
  For a known signal count \(m\), the class of admissible procedures,
  denoted \(\Delta_{\alpha, m}\), consists of all sequential tests
  \((T, D)\) that always reject exactly \(m\) hypotheses and satisfy the
  FWE constraint
\begin{equation}\label{Delta.m.def}
\Delta_{\alpha,m} := \left\{ (T,D) : |D| = m \text{ and } \sup_{|A| = m} \mathbb{P}_A(D \neq A) \le \alpha \right\}.
\end{equation}
\item
  For a bounded signal count \(|A| \in [l, u]\), where the number of
  rejections is not fixed, the class of admissible procedures, denoted
  \(\Delta_{\alpha, \beta, l, u}\), consists of all sequential tests
  \((T,D)\) that satisfy the FWE constraints for both Type I and Type II
  errors \[
  \Delta_{\alpha, \beta, l, u} := \left\{ (T,D) : \sup_{A \in [l, u]} \mathbb{P}_A(D \setminus A \neq \emptyset) \le \alpha \text{ and } \sup_{A \in [l, u]} \mathbb{P}_A(A \setminus D \neq \emptyset) \le \beta \right\}.
  \]
\end{enumerate}

\end{definition}
Note that in \eqref{Delta.m.def} where the decision rule~$D$ is restricted to $|D|=m=|A|$,  both the Type~I and II FWE equal  $\mathbb{P}_A(D \neq A)$. The primary goal of this paper is to introduce procedures that belong to
these classes, \(\Delta_{\alpha, m}\) and
\(\Delta_{\alpha, \beta, l, u}\), and to prove that they are
asymptotically optimal. A procedure is deemed asymptotically optimal if
its expected stopping time (ESS) attains the theoretical lower bound for
its respective class in the limit as the error probabilities
\(\alpha, \beta \to 0\).

\section{Background and Related Work}\label{sec-background}

Classical multiple testing procedures focus on controlling the
family-wise error rate (FWE), the probability of making one or more
false rejections \citep{Holm79}. For the large-scale testing problems
common in modern applications, the false discovery rate (FDR) of
\citet{Benjamini95} offers a widely adopted and more powerful
alternative by controlling the expected proportion of false discoveries. See our discussion of the extension of these results to FDR in Section~\ref{sec-conclusion}.

While most research in this area assumes a fixed sample size, recent
work has addressed sequential settings where data arrive over time. In
this framework, the objective is two-fold: to control a desired error
rate and to minimize the expected sample size (ESS) required to reach a
decision. Extant sequential multiple testing procedures come in essentially two forms: Procedures that terminate individual data streams at different times \citep{Bartroff18,Bartroff14,Bartroff20,Malloy14}, and procedures that terminate sampling of all streams simultaneously to make a single, collective inference \citep{De12b,song2017,song2019,He2020}.

A key development in the latter sequential domain is the characterization of
asymptotic optimality, where a procedure's ESS is shown to match a
theoretical lower bound as the specified error rates vanish.
\citet{song2017} introduced the \emph{gap} and the
\emph{gap-intersection} procedures, which base decisions on the
separation between ordered log-likelihood ratios, and proved their
asymptotic optimality for FWE control. This optimality framework has
since been extended to other error metrics, including the FDR and its
variants \citep{song2019, He2020}.

Independent of sequential methods, the use of prior information to
improve statistical power is a well-established technique in multiple
testing. In the fixed-sample setting, this is often accomplished by
assigning weights to hypotheses. These weights can be applied directly
to modify testing procedures \citep{Holm79,Hochberg94,Genovese06}, used to define
the error metric itself, as in the weighted FDR \citep{Benjamini97}, or
integrated into a Bayesian decision-theoretic framework
\citep{Bogdan2011}.

However, the existing literature on asymptotically optimal sequential
testing has largely focused on the unweighted case, where all hypotheses
are treated as essentially exchangeable. The question of how to integrate weights
into these optimal sequential procedures, while preserving both error
control and asymptotic efficiency, remains open. This paper addresses
this question by introducing a weighted version of the gap procedure and
proving its asymptotic optimality.

\section{Weighted Sequential Procedures}\label{sec-methods}

In this section, we introduce the primary contribution of this work: a
sequential multiple testing procedure that formally incorporates prior
information through weights. We begin by addressing the foundational
case where the number of signals is known in advance, for which we
propose the Weighted Gap Procedure. In Section~\ref{weighted-gap-intersection-procedure-for-bounded-signal-count} we generalize this to case of bounds on the number of signals.

\subsection{Weighted Gap Procedure for Known Signal
Count}\label{weighted-gap-procedure-for-known-signal-count}

We build upon the unweighted gap procedure of \citet{song2017},
extending it to leverage the pre-specified weights
\(\bm{W} = (W_1, W_2, \dots, W_J)\). In this section we only assume that the $W_i$ are positive values, however see Remark~\ref{rem-scaling-wgap} about scaling. The core of our method is the
modification of the LLRs to create a set of statistics that reflect this
prior information. We define the \emph{weighted log-likelihood ratio}
(WLLR) for each stream \(j\) as an additive shift of the original LLR 
\begin{equation}\label{W.LLR}
\lambda_W^j(n) := \lambda^j(n) + \log(W_j).
\end{equation}
This formulation gives a \emph{head start} to hypotheses with larger
weights and penalizes those with smaller weights. At any time \(n\), we
denote the ordered WLLR statistics by
\(\lambda_W^{(1)}(n) \ge \lambda_W^{(1)}(n) \ge \dots \ge \lambda_W^{(J)}(n)\), and by convention set 
\begin{equation}\label{bdry.LLR.ord}
\lambda_W^{(0)}(n)=\infty\quad\mbox{and}\quad \lambda_W^{(J)}(n):=-\infty \quad\mbox{for all $n$.}
\end{equation} This leads to the following natural generalization of the unweighted gap procedure.

\begin{definition}[Weighted Gap
Procedure]\protect\hypertarget{def-wgap-procedure}{}\label{def-wgap-procedure}

For a known number of signals \(|A| = m\) and a fixed threshold
\(c > 0\), the \emph{weighted gap procedure} is the sequential test
\((T_W, D_W)\) defined by the following two components.

\begin{enumerate}
\def\labelenumi{\arabic{enumi}.}
\tightlist
\item
  \textbf{Stopping Time} \(T_W\): Sampling is stopped at the first time
  \(n\) that the gap between the \(m\)\textsuperscript{th} and
  \((m+1)\)\textsuperscript{th} ordered WLLRs is at least \(c\): \[
  T_W := \inf\left\{ n \ge 1 \mid \lambda_W^{(m)}(n) - \lambda_W^{(m+1)}(n) \ge c \right\}.
  \]
\item
  \textbf{Decision Rule} \(D_W\): Upon stopping at \(T_W\), the decision
  set consists of the indices of the hypotheses corresponding to the
  \(m\) largest WLLRs at that time.
\end{enumerate}

\end{definition}

To ensure that this procedure controls the FWE, the threshold \(c\) must
be selected appropriately. The following proposition establishes the
required condition on \(c\) by providing an upper bound on the FWE.

\begin{proposition}[FWE Control for Weighted Gap
Procedure]\protect\hypertarget{prp-fwe-control-wgap}{}\label{prp-fwe-control-wgap}

For any true signal set \(A\) with \(|A| = m\), the FWE of the weighted
gap procedure \((T_W, D_W)\) is bounded above by \[
\mathbb{P}_A(D_W \neq A) \le \exp(-c) \left(\sum_{j \in A^c} W_j\right) \left(\sum_{k \in A} \frac{1}{W_k}\right).
\] Consequently, to ensure
\(\sup_{|A| = m} \mathbb{P}_A(D_W \neq A) \le \alpha\), it suffices to
choose the threshold \(c\) such that
\begin{equation}\phantomsection\label{eq-threshold-c}{
c \ge |\log\alpha| + \log\mathcal C_W(m,J)
}\end{equation} where the term \(\mathcal C_W(m, J)\) is defined as 
\begin{equation}\label{m.CW.def}
\mathcal C_W(m,J) := \max_{|A| = m} \left\{ \left(\sum_{j \in A^c} W_j\right) \left(\sum_{k \in A} \frac{1}{W_k}\right) \right\}.
\end{equation}

\end{proposition}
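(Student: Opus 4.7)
The plan is to reduce the event $\{D_W \neq A\}$ to a union of pairwise events indexed by a null $j\in A^c$ and a signal $k\in A$, and then to bound each pairwise probability by a likelihood-ratio / change-of-measure argument. Since $|D_W| = m = |A|$, whenever $D_W \neq A$ there exist some $j \in A^c \cap D_W$ and some $k \in A \setminus D_W$; at the stopping time we then have $\lambda_W^j(T_W) \ge \lambda_W^{(m)}(T_W)$ and $\lambda_W^k(T_W) \le \lambda_W^{(m+1)}(T_W)$, while by definition of $T_W$ the gap between these ordered WLLRs is at least $c$. Substituting the definition of $\lambda_W^j$ from \eqref{W.LLR} gives $\lambda^j(T_W) - \lambda^k(T_W) \ge c + \log(W_k/W_j)$. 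A union bound therefore yields
\[
\mathbb{P}_A(D_W \neq A) \le \sum_{j \in A^c}\sum_{k \in A} \mathbb{P}_A\!\left(\lambda^j(T_W) - \lambda^k(T_W) \ge c + \log(W_k/W_j)\right).
\]

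The main step is bounding each summand. Under $\mathbb{P}_A$ (with $j \in A^c$ a true null and $k \in A$ a true signal) the stream independence assumption and the identities $\mathbb{E}_{P_0^j}[e^{\lambda^j(1)}] = 1$ and $\mathbb{E}_{P_1^k}[e^{-\lambda^k(1)}] = 1$ show that the process $M_n := \exp(\lambda^j(n) - \lambda^k(n))$ is a nonnegative $(\mathcal{F}_n, \mathbb{P}_A)$-martingale with $M_0 = 1$. Let $\widetilde{\mathbb{P}}_{A,j,k}$ be the measure that swaps the roles of $j$ and $k$, i.e., replaces $P_0^j$ by $P_1^j$ and $P_1^k$ by $P_0^k$ while keeping all other streams as in $\mathbb{P}_A$. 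Then on $\mathcal{F}_n$ the Radon–Nikodym derivative is exactly $d\widetilde{\mathbb{P}}_{A,j,k}/d\mathbb{P}_A = M_n$. Decomposing over $\{T_W = n\}$ and applying this change of measure on each $\mathcal{F}_n$, using that $T_W<\infty$ almost surely (which follows from the positive KL drifts under both measures), we obtain for the pairwise event $E_{jk} := \{\lambda^j(T_W) - \lambda^k(T_W) \ge c + \log(W_k/W_j)\}$,
\[
\mathbb{P}_A(E_{jk}) = \widetilde{\mathbb{E}}_{A,j,k}\!\left[M_{T_W}^{-1}\,\mathbf{1}_{E_{jk}}\right] \le e^{-c}\,\frac{W_j}{W_k}\,\widetilde{\mathbb{P}}_{A,j,k}(E_{jk}) \le e^{-c}\,\frac{W_j}{W_k},
\]
where the inequality uses the definition of $E_{jk}$ to bound $M_{T_W}^{-1}$ pointwise.

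Summing this bound over $j \in A^c$ and $k \in A$ factors into the stated product, giving
\[
\mathbb{P}_A(D_W \neq A) \le e^{-c}\left(\sum_{j \in A^c} W_j\right)\left(\sum_{k \in A} \frac{1}{W_k}\right),
\]
which is the proposition's inequality. Taking the supremum over $|A| = m$ yields the bound $e^{-c}\,\mathcal{C}_W(m,J)$, and requiring this to be at most $\alpha$ rearranges to the threshold condition \eqref{eq-threshold-c}.

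The main obstacle I anticipate is the technical justification of the change-of-measure step at the random time $T_W$ — specifically, verifying that $\widetilde{\mathbb{P}}_{A,j,k}$ is well defined as a probability measure on $\mathcal{F}_\infty$ and that the identity $\mathbb{P}_A(E)=\widetilde{\mathbb{E}}_{A,j,k}[M_{T_W}^{-1}\mathbf{1}_E]$ holds on $\{T_W<\infty\}$. This is standard for likelihood-ratio martingales built from independent product measures, but it relies on showing $\mathbb{P}_A(T_W<\infty)=1$, which in turn is driven by $I_1^k,I_0^j>0$ ensuring the gap eventually exceeds $c$; the $\log W_j$ shifts are bounded constants and do not affect this. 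Once this is in place, the pairwise exponential bound and the sum-over-pairs structure immediately deliver the product form and the threshold calibration.
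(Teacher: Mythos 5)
Your proof is correct and takes essentially the same route as the paper's: the same union bound over mis-ordered pairs $(j,k)\in A^c\times A$, followed by the same change of measure to the swapped configuration $C=A\cup\{j\}\setminus\{k\}$ (your $\widetilde{\mathbb{P}}_{A,j,k}$) and the pointwise bound $M_{T_W}^{-1}\le e^{-c}\,W_j/W_k$ on the pairwise event. If anything, you are more explicit than the paper about why the stopping rule forces the $c$-gap for the mis-ordered pair and about the martingale and finiteness technicalities behind the optional-stopping identity.
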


The proof of Proposition~\ref{prp-fwe-control-wgap}, which relies on a
change-of-measure argument, is deferred to the Appendix. About this procedure we make
the following remarks.

\begin{refremark}[The Price of Weighting]
The term \(\mathcal C_W(m, J)\) can be interpreted as the \emph{price of
weighting}, as it quantifies the inflation of the threshold due to the
imbalance introduced by the weights. In the unweighted case where all
\(W_j = 1\), this term simplifies to \(m(J - m)\), recovering the
threshold from \citet{song2017}.

\label{rem-price-weighting}

\end{refremark}

\begin{refremark}[Computation of \(\mathcal C_W(m, J)\)]
The calculation of \(\mathcal C_W(m,J)\) appears to be a combinatorial
optimization problem over all \(\binom{J}{m}\) possible signal sets.
However, the maximum is achieved by a specific configuration. Let the
weights be ordered such that
\(W_{(1)} \le W_{(2)} \le \dots \le W_{(J)}\). The maximum is attained
by selecting the set \(A\) to be the \(m\) hypotheses corresponding to
the smallest weights. Thus, the computation simplifies to \[
\mathcal C_W(m, J) = \left(\sum_{j = m + 1}^{J} W_{(j)}\right) \left(\sum_{k = 1}^{m} \frac{1}{W_{(k)}}\right).
\] This provides a simple expression for calculating the threshold
\(c\).

\label{rem-computation-cw}

\end{refremark}

\begin{refremark}[Invariance to Scaling]
The weighted gap procedure is invariant to positive scaling of the
weights. If every weight \(W_j\) is replaced by \(\gamma W_j\) for some
constant \(\gamma > 0\), the procedure remains unchanged. This is
because both the stopping rule, which depends on the ratio
\(W_k / W_j\), and the constant \(\mathcal C_W(m,J)\) are invariant to
such scaling. This property implies that only the relative magnitudes of
the weights are meaningful, which justifies the use of a normalizing
constraint (e.g., \(\sum_{j = 1}^J W_j = J\)) to place different
weighting schemes on a common and comparable scale.

\label{rem-scaling-wgap}

\end{refremark}

\subsection{Weighted Gap-Intersection Procedure for Bounded Signal
Count}\label{weighted-gap-intersection-procedure-for-bounded-signal-count}

The Weighted Gap Procedure in Definition~\ref{def-wgap-procedure} is
designed for the specific case where the number of signals, \(m\), is
known precisely. To address the more general and often more practical
scenario where this count is only known to lie within an interval,
\(|A| \in [l, u]\), we propose a corresponding composite method. This
procedure, termed the \emph{weighted gap-intersection procedure}, adapts
the logic of its unweighted counterpart from \citet{song2017} to the
weighted setting.

\begin{definition}[Weighted Gap-Intersection
Procedure]\protect\hypertarget{def-wgi-procedure}{}\label{def-wgi-procedure}

For a bounded signal count \(|A| \in [l, u]\) and a set of positive
thresholds \((a, b, c, d)\), the \emph{weighted gap-intersection
procedure} is the sequential test \((T_{W, GI}, D_{W, GI})\) defined as
follows

\begin{enumerate}
\def\labelenumi{\arabic{enumi}.}
\tightlist
\item
  \textbf{Stopping Time} \(T_{W, GI}\): The procedure stops at the
  minimum of three individual stopping times \[
  T_{W, GI} := \min\{\tau_{1, W}, \tau_{2, W}, \tau_{3, W}\}.
  \] The component stopping times are defined as
\end{enumerate}

\begin{itemize}
\tightlist
\item
  An intersection rule, \(\tau_{2, W}\), that triggers when all WLLRs
  have exited a central indecision region, provided the number of
  positive WLLRs, \(p_W(n) := |\{j \mid \lambda_W^j(n) > 0\}|\), is
  consistent with the prior bounds: \[
  \tau_{2, W} := \inf\{n \ge 1 \mid l \le p_W(n) \le u \text{ and } \lambda_W^j(n) \notin (-a, b) \text{ for all } j \in [J]\}.
  \]
\item
  Two hybrid boundary rules \(\tau_{1, W}\) and \(\tau_{3, W}\), that
  use a gap condition to accelerate stopping when the evidence strongly
  suggests the number of signals is exactly \(l\) or \(u\): \[
  \begin{aligned}
  \tau_{1, W} &\: := \inf\{n \ge 1 \mid \lambda_W^{(l + 1)}(n) \le -a \text{ and } \lambda_W^{(l)}(n) - \lambda_W^{(l + 1)}(n) \ge c\}, \\
  \tau_{3, W} &\: := \inf\{n \ge 1 \mid \lambda_W^{(u)}(n) \ge b \text{ and } \lambda_W^{(u)}(n) - \lambda_W^{(u + 1)}(n) \ge d\}. 
  \end{aligned}
  \]
\end{itemize}

\begin{enumerate}
\def\labelenumi{\arabic{enumi}.}
\setcounter{enumi}{1}
\tightlist
\item
  \textbf{Decision Rule} \(D_{W, GI}\): Upon stopping at \(T_{W, GI}\),
  the procedure rejects the set of hypotheses corresponding to streams
  with positive WLLRs. The size of this rejection set is then truncated
  to lie within the interval \([l, u]\).
\end{enumerate}

\end{definition}

The thresholds \((a, b, c, d)\) are the parameters of the procedure that
must be selected to guarantee the desired FWE constraints. The following
proposition provides sufficient conditions for these thresholds to
ensure error control.

\begin{proposition}[FWE Control for Weighted Gap-Intersection
Procedure]\protect\hypertarget{prp-fwe-control-wgi}{}\label{prp-fwe-control-wgi}

Let the FWE be defined as the maximum probability of making at least one
false positive (Type I error) or at least one false negative (Type II
error). To ensure
\(\sup_{A \in [l, u]} \mathbb{P}_A(D_{W, GI} \setminus A \neq \emptyset) \le \alpha\)
and
\(\sup_{A \in [l, u]} \mathbb{P}_A(A \setminus D_{W, GI} \neq \emptyset) \le \beta\),
it suffices to choose the thresholds to satisfy
\begin{equation}\phantomsection\label{eq-wgi-threshold-abcd}{
\begin{aligned}
b \ge |\log(\alpha / 2)| + \log\left(\max_{A \in [l, u]} \sum_{j \in A^c} W_j \right), &\: \quad 
c \ge |\log(\alpha / 2)| + \log \mathcal C_W (l, J), \\
a \ge |\log(\beta / 2)| + \log\left(\max_{A \in [l, u]} \sum_{k \in A} W_k^{- 1} \right), &\: \quad
d \ge |\log(\beta / 2)| + \log \mathcal C_W (u, J)
\end{aligned}
}\end{equation}

\end{proposition}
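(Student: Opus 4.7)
The plan is to bound the Type~I and Type~II FWE separately by decomposing over which of the three component stopping rules triggers, and to reduce each error-stopping combination to a boundary-crossing event controlled by a change-of-measure (maximal inequality) argument. A key simplification is that the Type~I events at $\tau_{2,W}$ and $\tau_{3,W}$ both imply some $j\in A^c$ crosses the upper boundary $b$, while the Type~II events at $\tau_{1,W}$ and $\tau_{2,W}$ both imply some $k\in A$ crosses the lower boundary $-a$, so only the Type~I/$\tau_{1,W}$ and Type~II/$\tau_{3,W}$ pairs require the refined gap arguments involving $\mathcal C_W(l,J)$ and $\mathcal C_W(u,J)$.

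First I would verify the boundary reductions. At $\tau_{2,W}$ every WLLR lies outside $(-a,b)$, so the positives coincide with $\{j:\lambda_W^j\ge b\}$ and the $l\le p_W\le u$ clause means no truncation occurs; a false rejection forces some $j\in A^c$ with $\lambda_W^j(\tau_{2,W})\ge b$, and a missed signal forces some $k\in A$ with $\lambda_W^k(\tau_{2,W})\le -a$. At $\tau_{3,W}$ the condition $\lambda_W^{(u)}\ge b>0$ makes the rejection set the top-$u$ WLLRs, so any $j\in A^c$ in it satisfies $\lambda_W^j\ge\lambda_W^{(u)}\ge b$; symmetrically at $\tau_{1,W}$ any missed $k\in A$ satisfies $\lambda_W^k\le\lambda_W^{(l+1)}\le -a$. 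Using the $\mathbb P_A$-martingales $\exp(\lambda^j)$ for $j\in A^c$ and $\exp(-\lambda^k)$ for $k\in A$ together with the shift \eqref{W.LLR} gives $\mathbb P_A(\sup_n\lambda_W^j(n)\ge b)\le W_je^{-b}$ and $\mathbb P_A(\sup_n(-\lambda_W^k(n))\ge a)\le W_k^{-1}e^{-a}$. Union-bounding and maximizing over $A\in[l,u]$ yields
\[
\mathbb P_A(\text{Type I via }\tau_{2,W}\text{ or }\tau_{3,W})\le e^{-b}\max_{A\in[l,u]}\sum_{j\in A^c}W_j\le\alpha/2,
\]
and the analogous Type~II bound $\le\beta/2$ follows from the condition on $a$.

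The core work is the two remaining cases, which I would handle by a subset-substitution trick. For Type~I via $\tau_{1,W}$: a false rejection gives some $j\in A^c$ in the top $l$, and since $|A|\ge l$ some $k\in A$ has rank $>l$, so $\lambda_W^j-\lambda_W^k\ge\lambda_W^{(l)}-\lambda_W^{(l+1)}\ge c$. For any $l$-subset $A_0\subset A$, the facts $|A_0|=l=|\text{top }l|$ and $j\in A_0^c\cap\text{top }l$ force some $k\in A_0$ to have rank $>l$, so the gap bound survives with $k\in A_0\subset A$. The $\mathbb P_A$-martingale $\exp(\lambda^j-\lambda^k)$ yields $\mathbb P_A(\sup_n[\lambda_W^j(n)-\lambda_W^k(n)]\ge c)\le (W_j/W_k)e^{-c}$, and a union bound combined with $A^c\subset A_0^c$ gives
\[
\mathbb P_A(\text{Type I via }\tau_{1,W})\le e^{-c}\Big(\sum_{j\in A^c}W_j\Big)\Big(\sum_{k\in A_0}W_k^{-1}\Big)\le e^{-c}\Big(\sum_{j\in A_0^c}W_j\Big)\Big(\sum_{k\in A_0}W_k^{-1}\Big)\le e^{-c}\mathcal C_W(l,J)\le\alpha/2.
\]
Dually, for Type~II via $\tau_{3,W}$ I would pick any $u$-superset $A_0\supset A$ (possible since $|A|\le u$); a missed signal $k\in A$ with rank $>u$ lies in $A_0\setminus\text{top }u$, forcing $A_0\ne\text{top }u$ and hence some $j'\in A_0^c\cap\text{top }u\subset A^c$. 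The symmetric calculation using $A_0^c\subset A^c$ and $A\subset A_0$ gives $\mathbb P_A(\text{Type II via }\tau_{3,W})\le e^{-d}\mathcal C_W(u,J)\le\beta/2$.

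Adding the two Type~I contributions yields total Type~I FWE $\le\alpha$, and similarly the Type~II FWE is $\le\beta$. The main obstacle I expect is the bookkeeping around the subset substitution: the change-of-measure step forces the summation indices to satisfy $j\in A^c$ and $k\in A$, so the auxiliary set $A_0$ must simultaneously (i) sit on the correct side of $A$ (inside for $\tau_{1,W}$, outside for $\tau_{3,W}$) so that the weight-sum monotonicity $A^c\subset A_0^c$ or $A\subset A_0$ bounds the product by the quantity maximized in the definition of $\mathcal C_W$, and (ii) have the exact size $l$ or $u$ so that $\mathcal C_W(\cdot,J)$ is recovered rather than some larger max over $A\in[l,u]$. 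Once this substitution is in place, the remainder reduces to routine union bounds and Doob-type maximal inequalities.
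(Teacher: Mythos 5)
Your proposal is correct and follows essentially the same route as the paper's proof: decompose the Type~I and Type~II errors according to which component stopping rule fires, reduce each to a boundary- or gap-crossing event, and control each event by a change-of-measure (martingale maximal inequality) bound followed by a union bound against the thresholds. Your subset-substitution step is in fact slightly more careful than the paper's argument, which attaches an indicator $\mathbf{1}(|A|=l)$ (resp.\ $\mathbf{1}(|A|=u)$) to the gap term and thus only explicitly accounts for $\tau_{1,W}$-induced false positives (resp.\ $\tau_{3,W}$-induced false negatives) at the boundary signal counts, whereas your argument covers all $|A|\in[l,u]$ while still landing on the required bounds $e^{-c}\,\mathcal C_W(l,J)$ and $e^{-d}\,\mathcal C_W(u,J)$.
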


We make the following remarks.

\begin{refremark}[Behavior at Interval Boundaries]
The definition of the gap-intersection procedure gracefully handles the
boundary cases where \(l = 0\) and/or \(u = J\). Recalling \eqref{bdry.LLR.ord}, if \(l = 0\) the constant
\(\mathcal C_W(0, J)\) becomes zero, causing the threshold \(c\) to be
effectively \(- \infty\). Symmetrically, if \(u = J\), then
\(\mathcal C_W(J, J) = 0\) and the threshold \(d\) becomes \(- \infty\).
This formally renders the boundary rules \(\tau_{1, W}\) and
\(\tau_{3, W}\) inactive. In the special case of no prior information on
the signal count (\(l = 0\) and \(u = J\)), the procedure thus
simplifies to the weighted intersection rule, \(T_W = \tau_{2, W}\),
with thresholds
\(a \ge |\log(\beta / 2)| + \log(\sum_{k = 1}^J W_k^{-1})\) and
\(b \ge |\log(\alpha / 2)| + \log(\sum_{j = 1}^J W_j)\). This
demonstrates that the weighted gap-intersection procedure is a proper
generalization which contains the simpler weighted intersection rule as
a special case.

\label{rem-wgi-boundary}

\end{refremark}

\begin{refremark}[Interpretation of Thresholds]
The structure of the thresholds in Proposition~\ref{prp-fwe-control-wgi}
reveals the logic of the composite stopping rule. The thresholds \(b\)
and \(a\) for the intersection rule component, \(\tau_{2, W}\), depend
on the maximum possible sum of weights (or their reciprocals) across all
admissible signal sets \(A\) such that \(|A| \in [l, u]\). In contrast,
the gap-based thresholds, \(c\) and \(d\), are more specific. They
depend on the constant \(\mathcal C_W(\cdot, J)\) evaluated only at the
respective boundaries of the interval, \(l\) and \(u\). This is because
the stopping rules \(\tau_{1, W}\) and \(\tau_{3, W}\) are designed to
handle precisely these boundary-case scenarios.

\label{rem-wgi-threshold}

\end{refremark}

\begin{refremark}[Computational Simplification]
The maximization in Proposition~\ref{prp-fwe-control-wgi} has a
closed-form solution based on the ordered weights. Let
\(W_{(1)} \le W_{(2)} \le \dots \le W_{(J)}\) denote the weights sorted
in non-decreasing order. The first maximum is attained by choosing the
set \(A\) that leaves the largest possible weights in the complement
\(A^c\), which corresponds to selecting \(A\) as the set of hypotheses
with the \(l\) smallest weights. The second maximum is attained by
choosing the set \(A\) that includes the largest possible reciprocal
weights, which corresponds to selecting \(A\) as the set of hypotheses
with the \(u\) smallest weights. This leads to the explicit formulas \[
\max_{A \in [l, u]} \sum_{j \in A^c} W_j = \sum_{j = l + 1}^{J} W_{(j)} \quad \text{and} \quad \max_{A \in [l, u]} \sum_{k \in A} W_k^{- 1} = \sum_{k = 1}^{u} W_{(k)}^{- 1}.
\] The terms \(\mathcal C_W(l, J)\) and \(\mathcal C_W(u, J)\) are
computed as described in Remark~\ref{rem-computation-cw}.

\label{rem-wgi-computation}

\end{refremark}

\section{Asymptotic Optimality Analysis}\label{sec-ao}

Having established the conditions for FWE control in the preceding
section, we now analyze the efficiency of the proposed weighted
procedures. The main result of this section is that these procedures are
asymptotically optimal. We prove this by showing that their expected
stopping times (ESS) attain the theoretical lower bounds established in
\citet{song2017} in the limit as the error probabilities vanish.

\subsection{Asymptotic Optimality of the Weighted Gap
Procedure}\label{asymptotic-optimality-of-the-weighted-gap-procedure}

We begin by proving the asymptotic optimality of the weighted gap
procedure \((T_W, D_W)\) for the known signal count setting.

\begin{theorem}[Asymptotic Optimality of the Weighted Gap
Procedure]\protect\hypertarget{thm-wgap-ao-fixedw-fixedj}{}\label{thm-wgap-ao-fixedw-fixedj}

Assume the LLR processes are random walks satisfying the assumptions in
Section~\ref{sec-setup-notations}. Let the threshold \(c\) be chosen
according to \eqref{eq-threshold-c} to guarantee FWE control at
level \(\alpha\). Then, for any true signal set
\(A \in \{A \subseteq [J]: |A| = m\}\), the weighted gap procedure is
asymptotically optimal, i.e., \[
\lim_{\alpha \to 0} \frac{\mathbb E_A (T_W)}{\frac{|\log \alpha|}{\eta_1^A + \eta_0^A}} = 1.
\]

\end{theorem}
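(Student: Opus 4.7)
The argument naturally splits into matching upper and lower bounds on $\mathbb{E}_A(T_W)$, of which only the upper bound is new. The lower bound $\mathbb{E}_A(T_W) \ge (1-o(1))|\log\alpha|/(\eta_1^A + \eta_0^A)$ is inherited directly from the universal information-theoretic bound of \citet{song2017}: by Proposition~\ref{prp-fwe-control-wgap} and the choice of $c$ in \eqref{eq-threshold-c}, we have $(T_W, D_W) \in \Delta_{\alpha,m}$, and since that lower bound depends only on the class and the KL informations $\{I_1^k, I_0^j\}$ and not on the weights, it applies verbatim. All the new work therefore goes into proving the matching upper bound $\mathbb{E}_A(T_W) \le (1+o(1))|\log\alpha|/(\eta_1^A + \eta_0^A)$.

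The guiding observation for the upper bound is that the weights enter only as additive shifts $\log W_j$, which are $O(1)$ constants independent of $\alpha$, whereas the threshold $c = |\log\alpha| + \log \mathcal{C}_W(m,J)$ diverges with $|\log\alpha|$. Under $\mathbb{P}_A$, the SLLN applied to each i.i.d.\ random walk gives $\lambda^k(n)/n \to I_1^k$ for $k\in A$ and $\lambda^j(n)/n \to -I_0^j$ for $j\in A^c$, so almost surely the top $m$ ranks in the ordered WLLR statistics are eventually occupied by the signal indices; once that happens,
\[
\lambda_W^{(m)}(n) - \lambda_W^{(m+1)}(n) \;=\; \min_{k\in A}\lambda_W^k(n) \;-\; \max_{j\in A^c}\lambda_W^j(n).
\]
Writing $L := \max_j |\log W_j|$, this gap is bounded below by $\min_{k\in A}\lambda^k(n) - \max_{j\in A^c}\lambda^j(n) - 2L$, and hence exceeds $c$ whenever $\lambda^k(n) \ge n(I_1^k-\varepsilon)$ for all $k\in A$, $\lambda^j(n) \le -n(I_0^j-\varepsilon)$ for all $j\in A^c$, and $n \ge (1+O(\varepsilon))|\log\alpha|/(\eta_1^A+\eta_0^A)$, for all sufficiently small $\alpha$.

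Concretely, I would fix $\varepsilon>0$, set $n_\alpha := \lceil(1+\varepsilon)|\log\alpha|/(\eta_1^A+\eta_0^A)\rceil$, and introduce the dominating stopping time
\[
\widetilde T \;:=\; \inf\bigl\{n \ge n_\alpha : \lambda^k(n) \ge n(I_1^k-\varepsilon)\ \forall k\in A,\ \lambda^j(n) \le -n(I_0^j-\varepsilon)\ \forall j\in A^c\bigr\}.
\]
The preceding analysis yields $T_W \le \widetilde T$ for all small enough $\alpha$, and then the estimate
\[
\mathbb{E}_A(\widetilde T) \;\le\; n_\alpha \;+\; \sum_{k\in A}\mathbb{E}_A[R_k^+] \;+\; \sum_{j\in A^c}\mathbb{E}_A[R_j^-],
\]
where $R_k^+$ (resp.\ $R_j^-$) denotes the last time the walk $\lambda^k$ (resp.\ $\lambda^j$) exits its $\varepsilon$-slowdown region under $\mathbb{P}_A$. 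Dividing by $|\log\alpha|/(\eta_1^A+\eta_0^A)$ and sending $\alpha\to 0$ then $\varepsilon\to 0$ closes the argument.

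The main obstacle is the uniform-integrability step: showing that each last-exit expectation $\mathbb{E}_A[R_k^+]$ and $\mathbb{E}_A[R_j^-]$ is finite and bounded independently of $\alpha$, so that the almost-sure asymptotics translate into first-order $L^1$ asymptotics for $T_W$. For finite-variance i.i.d.\ increment random walks this is a classical last-exit estimate---a Chebyshev or Hajek--Renyi bound applied to the centered walks suffices---but the bookkeeping with $J$ streams and the $O(1)$ shifts $L$ and $\log\mathcal{C}_W(m,J)$ has to be done carefully. Once this is in place, the conclusion that all weight-dependent quantities contribute only $O(1)$ terms to $\mathbb{E}_A(T_W)$ is exactly what makes weighting affect only lower-order terms, confirming first-order asymptotic optimality.
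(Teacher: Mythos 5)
Your decomposition into a lower bound (inherited from \citet{song2017} via membership in $\Delta_{\alpha,m}$) and a new upper bound is exactly the paper's structure, and your dominating stopping time is conceptually the same device the paper uses: the paper defines $\tilde T_W$ as the first time every signal's WLLR exceeds every null's WLLR by at least $c$, observes (as you do) that this forces the top $m$ ranks to be the signals so that $T_W\le\tilde T_W$, rewrites the event as the simultaneous crossing of the $m(J-m)$ difference walks $R_{k,j}(n)=\lambda^k(n)-\lambda^j(n)$ over the weight-shifted levels $c-\log(W_k/W_j)$, and then invokes Lemma~A.2 of \citet{song2017} as a black box to get $\mathbb{E}_A(\tilde T_W)\le \max_{k,j}\tilde c_{k,j}/(I_1^k+I_0^j)+O(m(J-m)\sqrt{\max \tilde c_{k,j}})$. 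Your route replaces that citation with a from-scratch SLLN/last-exit-time argument on the $J$ individual walks; it yields only the first-order term rather than the paper's explicit $O(\sqrt{|\log\alpha|})$ remainder, but that is all the theorem requires, and it has the virtue of making transparent why the $O(1)$ weight shifts cannot affect the leading order.

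Two points need repair before your version closes. First, the $\varepsilon$-bookkeeping: with slowdown parameter $\varepsilon$ and $n_\alpha=\lceil(1+\varepsilon)|\log\alpha|/(\eta_1^A+\eta_0^A)\rceil$, the guaranteed gap at time $n\ge n_\alpha$ is $n(\eta_1^A+\eta_0^A-2\varepsilon)-2L$, and $(1+\varepsilon)\bigl(1-2\varepsilon/(\eta_1^A+\eta_0^A)\bigr)$ can be \emph{less} than $1$ when $\eta_1^A+\eta_0^A<2$; you must tie the slowdown parameter to $\varepsilon(\eta_1^A+\eta_0^A)$ (say, use $\varepsilon(\eta_1^A+\eta_0^A)/3$ in the typical regions) so the product genuinely exceeds $1$. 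Second, and more substantively, the tool you name for the ``main obstacle'' does not suffice: a Chebyshev or H\'ajek--R\'enyi bound gives $\mathbb{P}(R\ge n)=O(1/n)$ for the last-exit time $R$ of a positive-drift finite-variance walk from its $\varepsilon$-slowdown region, and $\sum_n 1/n$ diverges, so finiteness of $\mathbb{E}[R]$ does not follow this way. The correct fact --- that the last passage time below a sublinear drift line has finite mean when the increments have finite variance --- requires the complete-convergence/last-passage results (Hsu--Robbins--Erd\H{o}s, Kesten-type moment equivalences) or, equivalently, exactly the content of Lemma~A.1/A.2 of \citet{song2017} that the paper cites. The statement you need is true under the paper's assumptions, so this is a citation gap rather than a fatal error, but as written the justification of the uniform-integrability step would not survive scrutiny.
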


\begin{proof}[Proof of Theorem~\ref{thm-wgap-ao-fixedw-fixedj}]
The proof proceeds by establishing an asymptotic upper bound on the
expected stopping time \(\mathbb E_A(T_W)\) and showing that it matches
the known asymptotic lower bound for any admissible procedure in this
class \citet{song2017}, which is given by \[
\inf_{(T,D) \in \Delta_{\alpha, m}} \mathbb E_A(T) \sim \frac{|\log\alpha|}{\eta_1^A + \eta_0^A} \quad \text{as } \alpha \to 0.
\] Our strategy relies on relating the stopping time \(T_W\) to a more
conservative, but more analytically tractable, stopping time. Let
\(\tilde{T}_W\) be the first time \(n\) at which the WLLR for every true
signal is greater than the WLLR for every true null by at least \(c\)
\begin{equation}\phantomsection\label{eq-wgi-conservative}{
\tilde{T}_W := \inf\left\{ n \ge 1 \mid \lambda_W^k(n) > \lambda_W^j(n) + c \quad \forall (k, j) \in A \times A^c \right\}.
}\end{equation} If the stopping condition for \(\tilde{T}_W\) is met at
time \(n\), it implies that
\(\min_{k \in A} \lambda_W^k(n) > \max_{j \in A^c} \lambda_W^j(n) + c\).
This forces the set of the top \(m\) WLLRs to be precisely the set of
true signals, \(A\). Consequently, the gap between the \(m\)-th and
\((m+1)\)-th ordered WLLRs satisfies
\(\lambda_W^{(m)}(n) - \lambda_W^{(m + 1)}(n) > c\). Therefore, the
stopping condition for \(T_W\) must also be satisfied, which implies
\(T_W \le \tilde{T}_W\). By the monotonicity of expectation, it follows
that \(\mathbb E_A(T_W) \le \mathbb E_A(\tilde{T}_W)\).

The stopping condition for \(\tilde{T}_W\) can be expressed in terms of
the unweighted LLRs. For each pair \((k, j) \in A \times A^c\), we
define the unweighted gap process
\(R_{k, j}(n) := \lambda^k(n) - \lambda^j(n)\). The stopping condition
in~\eqref{eq-wgi-conservative} is equivalent to requiring that
\(R_{k, j}(n) \ge c_{k, j}\) for all \(m (J - m)\) such pairs, where the
respective boundaries are \(\tilde c_{k, j} := c - \log(W_k / W_j)\).
Each process \(R_{k, j}(n)\) is a random walk with a positive drift
\(\mu_{k, j} = I_1^k + I_0^j\) and finite variance, as per our
assumptions.

We can now apply the result for the expected time for multiple random
walks to simultaneously cross their respective boundaries \citep[Lemma
A.2,][]{song2017}. Thus we have
\begin{align*}
\mathbb{E}_A (\tilde T_W) 
&\le \max_{(k,j) \in A \times A^c} \left( \frac{\tilde c_{k,j}}{\mu_{k,j}} \right) + O\left(m (J - m) \sqrt{\max_{(k,j) \in A \times A^c} \tilde c_{k, j}}\right) \\
&= \frac{\left|\log\alpha\right|}{\eta_1^A + \eta_0^A} + \tilde C_W + O\left(m (J - m) \sqrt{\log\alpha}\right),
\end{align*} where \[
\tilde C_W = \max_{(k,j) \in A \times A^c} \left[\frac{\log\left(\left.\max_{\left|A\right| = m} \left(\sum_{j \in A^c} W_j\right) \left(\sum_{k \in A} W_k^{- 1}\right)\right/\left(W_k / W_j\right)\right)}{I_1^k + I_0^j}\right]
\] Consequently, \[
\lim_{\alpha \to 0} \frac{\mathbb{E} (T_W)}{\frac{\left|\log\alpha \right|}{\eta_1^k + \eta_0^j}} \le \lim_{\alpha \to 0} \left[1 + \left(\tilde C_W + O\left(m (J - m) \sqrt{\left|\log\alpha \right|}\right)\right) \frac{\eta_1^k + \eta_0^j}{\left|\log\alpha\right|}\right] = 1
\] Since the upper bound on the performance of our procedure matches the
lower bound, the weighted gap procedure is asymptotically optimal. This
concludes the proof.
\end{proof}

\begin{refremark}[Finite Stopping Time]
We note that the stopping time \(T_W\) is well-defined, i.e., it is
finite almost surely. This property follows from the fact that
\(T_W \le \tilde{T}_W\), the more conservative
stopping time~\eqref{eq-wgi-conservative}. The almost sure finiteness of \(\tilde{T}_W\) is itself a
consequence of our assumption of positive KL information, which implies
that the underlying measures \(P_0^j\) and \(P_1^j\) are singular. This
ensures that each unweighted gap process \(R_{k, j}(n)\) has a positive
drift and will almost surely cross its respective boundary in finite
time.

\label{rem-wgap-finite-tw}

\end{refremark}

\begin{refremark}[Use of the Conservative Stopping Time]
The core of the preceding proof relies on bounding the procedure's
stopping time, \(T_W\), by the more conservative stopping time
\(\tilde T_W\). This approach provides an asymptotic upper bound on the
ESS that is sufficient to prove first-order optimality. However, it is
worth noting that in practice, \(T_W\) can be strictly smaller than
\(\tilde T_W\). The weighted gap procedure can stop as soon as the gap
between the \(m\)\textsuperscript{th} and
\((m + 1)\)\textsuperscript{th} ordered WLLRs is established, which does
not require every signal's WLLR to be greater than every null's. This
suggests that the procedure's finite-sample performance may be even
better than what the asymptotic upper bound alone would indicate.

\label{rem-wgap-conservative-time}

\end{refremark}

\begin{refremark}[Influence of Weight Dispersion]
The proof reveals that while the weights do not affect the first-order
asymptotic optimality, they do influence the ESS through the
second-order term \(\tilde C_W\). This term is a function of the maximal
ratio between weight-adjusted information rates. Highly disparate
weights will lead to a larger \(\tilde C_W\) and a more significant
\(O(\sqrt{|\log\alpha|})\) term. This implies that while the procedure
is always asymptotically optimal, extreme weight specifications may
degrade its finite-sample performance by increasing the expected
stopping time.

\label{rem-wgap-weight-cost}

\end{refremark}

The asymptotic optimality result
in~Theorem~\ref{thm-wgap-ao-fixedw-fixedj} holds for a fixed number of
hypotheses \(J\). A crucial question, particularly relevant for modern
large-scale applications, is whether this optimality is robust in a
high-dimensional regime where \(J\) grows as the error probability
\(\alpha\) vanishes. This type of analysis, where the problem dimensions
grow with the inverse of the error tolerance, has been explored in
related sequential contexts \citep{He2020}. We therefore investigate the
conditions on the growth rate of \(J\) under which the weighted gap
procedure remains asymptotically optimal. Let
\(\kappa := |\log\alpha|\), and consider the limit as both
\(J, \kappa \to \infty\).

\begin{theorem}[Optimality in the \(J \to \infty\)
Regime]\protect\hypertarget{thm-wgap-ao-fixedw-j}{}\label{thm-wgap-ao-fixedw-j}

The weighted gap procedure is asymptotically optimal in the regime where
\(J, \kappa \to \infty\) if the following two conditions hold:

\begin{enumerate}
\def\labelenumi{\arabic{enumi}.}
\tightlist
\item
  \(J = o(\kappa^{1/4})\),
\item
  \(\log \mathcal C_W(m, J) = o(\kappa)\).
\end{enumerate}

\end{theorem}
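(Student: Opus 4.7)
The plan is to refine the proof of Theorem~\ref{thm-wgap-ao-fixedw-fixedj} by tracking the joint dependence of every error term on $J$ and $\kappa$ and verifying that both stated conditions together force each term to be $o(\kappa)$. The information-theoretic lower bound $\kappa/(\eta_1^A + \eta_0^A)$ for the admissible class $\Delta_{\alpha,m}$ from \citet{song2017} carries over unchanged, because it is pointwise in $A$ and depends only on the KL rates of the particular streams in $A$, not on the ambient dimension $J$. It thus suffices to show that the upper bound $\mathbb{E}_A(T_W) \le \mathbb{E}_A(\tilde T_W)$ from the conservative stopping time~\eqref{eq-wgi-conservative} still matches this lower bound to first order as $J, \kappa \to \infty$ jointly.

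First I would substitute the threshold $c = \kappa + \log \mathcal{C}_W(m,J)$ from~\eqref{eq-threshold-c} into the inequality established in Theorem~\ref{thm-wgap-ao-fixedw-fixedj},
\[
\mathbb{E}_A(\tilde T_W) \le \frac{\kappa}{\eta_1^A + \eta_0^A} + \tilde C_W + O\!\left(m(J-m)\sqrt{\max\nolimits_{(k,j)\in A\times A^c} \tilde c_{k,j}}\right),
\]
and handle the two residual terms separately. Definition~\eqref{m.CW.def} implies $\log(W_{\max}/W_{\min}) \le \log \mathcal{C}_W(m,J)$, so $\max_{(k,j)} \tilde c_{k,j} \le c + \max_{k,j}\log(W_j/W_k) \le \kappa + 2\log \mathcal{C}_W(m,J) = O(\kappa)$ under condition~2. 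The constant $\tilde C_W$ is itself $O(\log \mathcal{C}_W(m,J)) = o(\kappa)$ by the same condition, while the stochastic correction becomes $O(J^2 \sqrt{\kappa})$, which is $o(\kappa)$ precisely when $J^2 = o(\sqrt{\kappa})$, i.e., under condition~1. Dividing the whole bound by $\kappa/(\eta_1^A + \eta_0^A)$ and passing to the joint limit drives the ratio to $1$, matching the lower bound.

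The main obstacle is verifying that the constant implicit in the $O(m(J-m)\sqrt{\cdot})$ term produced by the multiple random-walk boundary-crossing result (Lemma~A.2 of \citet{song2017}) is itself uniform in $J$, with all $J$-dependence captured by the explicit prefactor $m(J-m)$. This will require revisiting that lemma to confirm its implicit constant depends only on $\min_{(k,j)} \mu_{k,j} \ge \eta_1^A + \eta_0^A$ and on uniform second-moment bounds for the increments $\lambda^k(1) - \lambda^j(1)$, both of which are afforded by the finite-variance and positive-KL assumptions of Section~\ref{sec-setup-notations}, supplemented by the implicit but standard assumption that the KL rates $I_0^j, I_1^j$ stay uniformly bounded away from $0$ and from $\infty$ along the sequence $J \to \infty$. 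A secondary remark is that the $\kappa^{1/4}$ threshold in condition~1 is essentially dictated by the square-root error term in Lemma~A.2; a sharper Anscombe-type refinement would relax this, and I would flag it as a natural avenue for strengthening the result rather than attempt it here.
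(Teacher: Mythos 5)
Your proposal is correct and follows essentially the same route as the paper: substitute the FWE-controlling threshold into the conservative upper bound from Theorem~\ref{thm-wgap-ao-fixedw-fixedj}, use $\mathcal{C}_W(m,J) \ge \max_{(k,j)\in A\times A^c}(W_j/W_k)$ so that condition~2 makes all weight-dependent log terms $o(\kappa)$, and use condition~1 to kill the $O(J^2\sqrt{\kappa})$ correction before comparing to the unchanged lower bound. Your additional observation that the implicit constant in Lemma~A.2 of \citet{song2017} must be uniform in $J$ (requiring the KL rates to stay bounded away from $0$ and $\infty$ along the sequence) is a legitimate point of care that the paper leaves tacit, but it does not alter the argument.
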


\begin{proof}[Proof of Theorem~\ref{thm-wgap-ao-fixedw-j}]
The upper bound on the weighted gap procedure's expected time is given
by \[
\begin{aligned}
\mathbb{E}_A (T_W) 
\le&\: \frac{\kappa + \log \mathcal C_W (m, J) + \log \max_{(k, j)\in A \times A^c}(W_j / W_k)}{\eta_1^A + \eta_0^A} + \\
&\: O\left(m(J - m) \sqrt{\kappa + \log \mathcal C_W (m, J) + \log \max_{(k, j) \in A \times A^c}(W_j / W_k)}\right).
\end{aligned}
\] Denote this bound by \(U_A (\kappa, J)\) and let $I_W + R_W = U_A(\kappa, J)\cdot (\eta_1^A + \eta_0^A)/ |\log \alpha|$ where 
\[
\begin{aligned}
I_W =&\: 1 + \left.\left(\log \mathcal C_W (m, J) + \log \max_{(k, j) \in A \times A^c} (W_j / W_k)\right)\right/\kappa, \\ 
R_W =&\: \left.(\eta_1^A + \eta_0^A) O\left(m(J - m) \sqrt{\kappa + \log \mathcal C_W (m, J) + \log \max_{(k, j) \in A \times A^c}(W_j / W_k)}\right)\right/\kappa.
\end{aligned}
\] Now, for any set \(A \subseteq [J]\) with \(|A| = m\), we can write
\[
\begin{aligned}
\mathcal C_W (m, J) \ge \left(\sum_{j \in A^c} W_j\right)\left(\sum_{k \in A} W_k^{-1}\right) \ge \max_{(k, j) \in A \times A^c} \left(\frac{W_j}{W_k}\right).
\end{aligned}
\] This implies that under condition 2, the term
\(\log \mathcal C_W (m, J) + \log \max_{(k, j)\in A \times A^c}(W_j / W_k)\)
is \(o(\kappa)\). Therefore the remainder term
\(R_W = (\eta_1^A + \eta_0^A) O(J^2 \sqrt{\kappa}) \kappa^{-1} \to 0\)
as \(J, \kappa \to \infty\). Similarly the leading term
\(I_W = 1 + o(\kappa)\kappa^{-1} \to 1\) as \(J, \kappa \to \infty\).
Thus we have \[
\lim_{J, \kappa \to \infty} \frac{\mathbb E_A (T_W)}{\frac{\kappa}{\eta_1^A + \eta_0^A}} \le 1. 
\] This concludes the proof.
\end{proof}

\begin{refremark}[Relaxed Conditions for Ordered Weights]
The condition \(\log \mathcal{C}_W(m, J) = o(\kappa)\) can be difficult
to verify directly. However, in the common scenario where weights are
assigned in a decaying, ordered fashion,
\(W_1 \ge W_2 \ge \dots \ge W_J > 0\), the condition can be
significantly simplified. In this case, \(\mathcal{C}_W(m, J)\) is
maximized by choosing \(A\) as the set of hypotheses with the smallest
weights, \(\{J - m + 1, \dots, J\}\). The term
\(\log \mathcal{C}_W(m, J)\) can be written as \[
\log \mathcal{C}_W(m, J) = \log\left(\sum_{j=1}^{J - m} W_j\right) + \log\left(\sum_{j = J - m + 1}^{J} \frac{1}{W_j}\right).
\] The first term is bounded by \(\log(J - m) + \log W_1\), which is of
order \(O(\log J)\) and thus satisfies the \(o(\kappa)\) requirement
under the condition \(J = o(\kappa^{1/4})\). Therefore, the optimality
condition effectively reduces to
\begin{equation}\phantomsection\label{eq-relaxed-cond-sum}{
\log\left(\sum_{j= J - m + 1}^{J} \frac{1}{W_j}\right) = o(\kappa).
}\end{equation} Under a further mild regularity condition on the tail of
the weight sequence, such as
\(\lim_{J \to \infty} \log W_J /  \log W_{J - m + 1} = 1\) as
\(J \to \infty\), the sum of reciprocals is dominated by its largest
term, \(1 / W_J\). This regularity condition intuitively means that the
rate of decay of the weights is stable in the extreme tail and does not
accelerate precipitously over the last few hypotheses. It is satisfied
by most common weight structures, including polynomial and exponential
decay. In this case \eqref{eq-relaxed-cond-sum} simplifies to a
more interpretable condition $-\log W_J = o(\kappa)$ on the smallest weight. This shows that for ordered, decaying weights, optimality is
maintained as long as the smallest weight does not decay to zero too
rapidly relative to the error tolerance.

\label{rem-wgap-relaxed-conditions}

\end{refremark}

\begin{refremark}[The Case of Uniformly Bounded Weights]
A simple and easily verifiable sufficient condition for optimality is
when the weights are uniformly bounded away from \(0\) and \(\infty\).
If there exist constants \(w_{\min}\) and \(w_{\max}\) such that
\(0 < w_{\min} \le W_j \le w_{\max} < \infty\) for all \(j\) as
\(J \to \infty\), then the second condition
of~Theorem~\ref{thm-wgap-ao-fixedw-j} is automatically satisfied under
the first. Specifically, in this case,
\(\mathcal C_W(m, J) \le (J - m) m \cdot w_{\max} / w_{\min}\), which
implies that \(\log \mathcal C_W(m,J)\) is of order \(O(\log J)\). The
condition \(\log J = o(\kappa)\) is weaker than and is implied by the
first condition, \(J = o(\kappa^{1/4})\). Thus, for uniformly bounded
weights, the primary constraint for maintaining asymptotic optimality is
that the number of hypotheses, \(J\), must not grow faster than the
fourth root of \(|\log\alpha|\).

\label{rem-bounded-weights}

\end{refremark}

\subsection{Asymptotic Optimality of the Weighted Gap-Intersection
Procedure}\label{asymptotic-optimality-of-the-weighted-gap-intersection-procedure}

We now extend the optimality analysis to the weighted gap-intersection
procedure, \((T_{W,GI}, D_{W,GI})\), which addresses the setting where
the signal count is bounded, \(|A| \in [l, u]\). The main result of this
section is that this more general procedure is also asymptotically
optimal, attaining the corresponding theoretical lower bound for each
case of the true signal count.

\begin{theorem}[Asymptotic Optimality of the Weighted Gap-Intersection
Procedure]\protect\hypertarget{thm-wgi-ao-fixedw-fixedj}{}\label{thm-wgi-ao-fixedw-fixedj}

Under the assumptions in Section~\ref{sec-setup-notations}, let the
thresholds \((a, b, c, d)\) be chosen according to
Proposition~\ref{prp-fwe-control-wgi} to ensure the procedure is in the
class \(\Delta_{\alpha, \beta, l, u}\). Then, for any true signal set
\(A \in \left\{A \subseteq [J] : |A| \in [l, u]\right\}\), the procedure
is asymptotically optimal. That is, its ESS attains the established
lower bound from \citet{song2017} \[
\lim_{\alpha,\beta \to 0} \frac{\mathbb E_A(T_{W,GI})}{L_A(\alpha, \beta; l, u)} = 1,
\] where \(L_A(\alpha, \beta; l, u)\) is defined as
\begin{equation}\phantomsection\label{eq-lower-lu}{
L_A(\alpha, \beta; l, u) :=
\begin{cases} 
\max\left\{ \frac{|\log\beta|}{\eta_0^A}, \frac{|\log\alpha|}{\eta_1^A + \eta_0^A} \right\} & \text{if } |A| = l, \\
\max\left\{ \frac{|\log\beta|}{\eta_0^A}, \frac{|\log\alpha|}{\eta_1^A} \right\} & \text{if } |A| \in (l, u), \\
\max\left\{ \frac{|\log\alpha|}{\eta_1^A}, \frac{|\log\beta|}{\eta_0^A + \eta_1^A} \right\} & \text{if } |A| = u.
\end{cases}
}\end{equation}

\end{theorem}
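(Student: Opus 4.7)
The plan is to proceed in parallel with the proof of Theorem~\ref{thm-wgap-ao-fixedw-fixedj}: the lower bound for any procedure in $\Delta_{\alpha,\beta,l,u}$ is already supplied by~\citet{song2017}, so I focus on producing a matching asymptotic upper bound for $\mathbb E_A(T_{W,GI})$ in each of the three regimes $|A|=l$, $|A|\in(l,u)$, and $|A|=u$ appearing in~\eqref{eq-lower-lu}. Since $T_{W,GI}\le\tau_{i,W}$ for each $i\in\{1,2,3\}$, in every case I pick the component stopping time tailored to the regime and bound it from above by a tractable ``everything-crossed'' conservative version, in the spirit of $\tilde T_W$ from~\eqref{eq-wgi-conservative}.

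For the interior regime $|A|\in(l,u)$, use $\tau_{2,W}$ by defining $\tilde\tau_{2,W}$ to be the first $n$ at which $\lambda_W^k(n)\ge b$ for every $k\in A$ and $\lambda_W^j(n)\le -a$ for every $j\in A^c$. On this event $p_W(n)=|A|\in[l,u]$, every WLLR is outside $(-a,b)$, so $\tau_{2,W}\le\tilde\tau_{2,W}$. Written in terms of the unweighted LLRs, this is a collection of $J$ simultaneous random-walk boundary crossings: each $\lambda^k$ (drift $I_1^k$) must exceed $b-\log W_k$, and each $\lambda^j$ (drift $-I_0^j$) must fall below $-a-\log W_j$. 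Applying Lemma~A.2 of~\citet{song2017} and substituting the threshold choices~\eqref{eq-wgi-threshold-abcd} gives the leading term $\max\{|\log\alpha|/\eta_1^A,\ |\log\beta|/\eta_0^A\}$ plus an $O(J\sqrt{\max(|\log\alpha|,|\log\beta|)})$ remainder, matching the middle line of~\eqref{eq-lower-lu}.

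For the boundary regime $|A|=l$, use $\tau_{1,W}$ and define $\tilde\tau_{1,W}$ as the first $n$ at which $\lambda_W^j(n)\le -a$ for every $j\in A^c$ \emph{and} $\lambda_W^k(n)-\lambda_W^j(n)\ge c$ for every $(k,j)\in A\times A^c$. When both conditions hold, the top $l$ ordered WLLRs are exactly $A$, so $\lambda_W^{(l+1)}\le -a$ and $\lambda_W^{(l)}-\lambda_W^{(l+1)}\ge c$, i.e.\ $\tau_{1,W}\le\tilde\tau_{1,W}$. This conservative event combines two classes of crossings: one-sided crossings on $\{\lambda^j:j\in A^c\}$ (drift $-I_0^j$, boundary $-a-\log W_j$) and gap crossings on $\{\lambda^k-\lambda^j:(k,j)\in A\times A^c\}$ (drift $I_1^k+I_0^j$, boundary $c-\log(W_k/W_j)$). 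Lemma~A.2 of~\citet{song2017} then yields $\mathbb E_A(\tilde\tau_{1,W})\le \max\{a/\eta_0^A,\ c/(\eta_1^A+\eta_0^A)\}$ plus lower-order corrections, and inserting the calibrations of $a$ and $c$ from~\eqref{eq-wgi-threshold-abcd} reproduces the first line of~\eqref{eq-lower-lu}. The case $|A|=u$ is handled symmetrically via $\tau_{3,W}$, swapping the roles of signals and nulls and using thresholds $b$ and $d$, yielding the third line.

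The step I expect to require the most bookkeeping is the boundary regimes $|A|\in\{l,u\}$, where the conservative stopping event couples $|A^c|$ one-sided crossings with $|A|\cdot|A^c|$ gap crossings driven by different drifts and offsets. One must verify that the second-order logarithmic terms introduced by the weights---namely $\log\mathcal C_W(l,J)$, $\log\mathcal C_W(u,J)$, and the boundary corrections $\log\max_j W_j$, $\log\max_k W_k^{-1}$, and $\log\max_{k,j}(W_j/W_k)$---are all of lower order than $|\log\alpha|$ and $|\log\beta|$, which is automatic for fixed $J$ and strictly positive weights, so that the leading constants $\eta_1^A+\eta_0^A$ and $\eta_0^A$ survive unperturbed. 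Granted this, the ratio of the upper bound to $L_A(\alpha,\beta;l,u)$ tends to one in each regime, which gives the claimed optimality.
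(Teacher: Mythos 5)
Your proposal is correct and follows essentially the same route as the paper's proof: for each regime you select the matching component stopping time ($\tau_{2,W}$, $\tau_{1,W}$, or $\tau_{3,W}$), dominate it by the same ``everything-crossed'' conservative stopping time, apply Lemma~A.2 of \citet{song2017} to the resulting family of random walks, and substitute the threshold calibrations from Proposition~\ref{prp-fwe-control-wgi} to match $L_A(\alpha,\beta;l,u)$. The weight-induced offsets you flag as bookkeeping are handled identically in the paper, where they appear as $O(1)$ shifts absorbed into the second-order terms for fixed $J$.
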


\begin{proof}[Proof of Theorem~\ref{thm-wgi-ao-fixedw-fixedj}]
The proof strategy mirrors that of
Theorem~\ref{thm-wgap-ao-fixedw-fixedj}. We derive an asymptotic upper
bound on \(\mathbb E_A(T_{W,GI})\) and show that it matches the
corresponding lower bound in \eqref{eq-lower-lu} for each case of
\(|A|\). Since
\(T_{W, GI} = \min\{\tau_{1, W}, \tau_{2, W}, \tau_{3, W}\}\), its
expectation is bounded by the minimum of the expectations of the
component stopping times,
\(\mathbb E_A[T_{W,GI}] \le \min\{\mathbb E_A[\tau_{1,W}], \mathbb E_A[\tau_{2,W}], \mathbb E_A[\tau_{3,W}]\}\).
The analysis proceeds by finding an upper bound for the relevant
stopping time in each of the three cases for the true signal count
\(|A|\).

We begin with the case \(l < \left|A\right| < u\). We have \[
T_{W,GI} = \min \{\tau_{1, W}, \tau_{2, W}, \tau_{3, W}\} \le \tau_{2, W}.
\] We define the more conservative stopping time 
$$\tilde T_{W,GI} = \inf\left\{n \ge 1 \mid (\forall k \in A, \lambda_W^k (n) \ge b) \wedge (\forall j \in A^c, \lambda_W^j (n) \le - a)\right\}
$$
which corresponds to
the first time every true signal has a weighted LLR above \(b\) and
every true null has a weighted LLR below \(- a\).  At \(\tilde T_{W,GI}\) the number of positive LLRs is
\(p_W(T_W) =\left|A\right| \in (l, u)\). This implies that at
\(\tilde T_{W,GI}\), both the conditions for \(\tau_{2, W}\) are are
satisfied and \(T_{W,GI} \le \tau_{2, W} \le \tilde T_{W,GI}\) and thus $\mathbb{E}_A (T_{W,GI}) \le \mathbb{E}_A (\tilde T_{W,GI})$. The stopping time \(\tilde T_{W,GI}\) requires a set of \(J\) random
walks to simultaneously cross their respective boundaries: for
\(k \in A\), the walk \(\lambda^k (n)\), with drift \(I_1^k\), must
cross \(b - \log W_k\) and \(j \in A^c\), the walk
\(- \lambda^{j} (n)\), with drift \(I_0^j\), must cross
\(a + \log W_j\). Using the result \citet[][Lemma~A.2]{song2017} for multiple random walks, we have \[
\begin{aligned}
\mathbb{E}_A (\tilde T_{W,GI}) 
\le&\: \max\left(\max_{k \in A} \left\{\frac{b - \log W_k}{I_1^k}\right\}, \max_{j \in A^c} \left\{\frac{a + \log W_j}{I_0^j}\right\}\right) \\
&\:+ O \left(J \sqrt{\max\left(\max_{k \in A} \left\{{b - \log W_k}\right\}, \max_{j \in A^c} \left\{{a + \log W_j}\right\}\right)}\right).
\end{aligned}
\] The first term can be simplified as \[
\begin{aligned}
\max_{k \in A} \left\{\frac{b - \log W_k}{I_1^k}\right\}
\le&\: \frac{b - \min_{k \in A}\log W_k}{\eta_1^A}, \\
\max_{j \in A^c} \left\{\frac{a + \log W_j}{I_0^j}\right\} 
\le&\: \frac{a +  \max_{j \in A^c}\log W_j}{\eta_0^A}.
\end{aligned}
\] Thus the upper bound becomes \[
\mathbb{E}_A (T_{W,GI}) \le \max\left\{\frac{b - \min_{k \in A}\log W_k}{\eta_1^A}, \frac{a +  \max_{j \in A^c}\log W_j}{\eta_0^A}\right\} + O(\sqrt{a \vee b}).
\] Now consider the case \(\left|A\right| = l\). We again
construct a more conservative stopping time 
$$\tilde T_{W,GI} = \inf \left\{n \ge 1 \mid (\forall j \in A^c, \lambda^j_W (n) \le - a) \wedge (\forall (k, j) \in A \times A^c, \lambda^k_W (n) - \lambda^j_W (n) \ge c) \right\}$$
which is the
first time that every weighted LLR for a true null is below \(- a\) and
the gap between every true signal's weighted LLR and every true null's
weighted LLR is at least \(c\). 

Since \(c > 0\), the second condition,
\(\forall (k, j) \in A \times A^c, \lambda^k_W (n) - \lambda^j_W (n) \ge c\),
implies that every true signal's LLR is strictly greater than every true
null's LLR. This forces the set of the top \(l\) ordered weighted LLRs
to be precisely the set of true signals, \(A\). Thus the condition
\(\lambda_W^{(l)} - \lambda_W^{(l + 1)} \ge c\) is met. The first
condition, \(\forall j \in A^c, \lambda^j_W (n) \le - a\), also implies
\(\lambda^{(l + 1)}_W (n) = \max_{j \in A^c} \lambda_W^{j} (n) \le - a\).
Thus, at \(\tilde T_{W,GI}\), both the conditions for \(\tau_{1, W}\)
are are satisfied and \(T_{W,GI} \le \tau_{1, W} \le \tilde T_{W,GI}\) and thus $\mathbb{E}_A (T_{W,GI}) \le \mathbb{E}_A (\tilde T_{W,GI})$. This conservative stopping time \(\tilde T_{W,GI}\) requires a
collection of \((J - l) + l(J - l) = (l + 1)(J - l)\) random walks to
cross their respective boundaries: for each of \(J - l\) true nulls
\(j \in A^c\), the random walk \(- \lambda^{j} (n)\), with drift
\(- I_0^j\), must cross the boundary \(a + \log W_j\), and for each of
the \(l (J - l)\) pairs \((k, j) \in A \times A^c\), the random walk
\(\lambda^k (n) - \lambda^j (n)\), with drift \(I_1^k + I_0^j\), must
cross the boundary \(c - \log (W_k / W_j)\). Applying \citet[][Lemma A.2]{song2017}, we have \[
\begin{aligned}
\mathbb{E}_A (\tilde T_{W,GI}) 
\le&\:\max\left(\max_{j \in A^c} \left\{\frac{a + \log W_j}{I_0^j}\right\}, \max_{(k, j) \in A \times A^c} \left\{\frac{c - \log (W_k / W_j)}{I_1^k + I_0^j}\right\}\right) + \\
O&\left((l + 1)(J - l) \sqrt{\max\left(\max_{j \in A^c} \left\{{a + \log W_j}\right\}, \max_{(k, j) \in A \times A^c} \left\{c - \log (W_k / W_j)\right\}\right)}\right).
\end{aligned}
\] Simplifying the leading term, we have \[
\mathbb{E}_A \left(T_{W,GI}\right) \le \max \left\{\frac{a + \max_{j \in A^c} \log W_j}{\eta_0^A}, \frac{c - \min_{(k, j) \in A\ \times A^c} \log (W_k / W_j)}{\eta_1^A + \eta_0^A}\right\} + O(\sqrt{a \vee c}).
\] Similar arguments can be used to derive an upper bound to
\(\mathbb{E}_A (T_{W,GI})\), when \(\left|A\right| = u\). Thus we have
\[
\mathbb{E}_A (T_{W, GI}) \le 
\begin{cases}
\max \left\{\frac{a + \max_{j \in A^c} \log W_j}{\eta_0^A}, \frac{c - \min_{(k, j) \in A\ \times A^c} \log (W_k / W_j)}{\eta_1^A + \eta_0^A}\right\} + O(\sqrt{a \vee c}), & \hspace{-0.5em}|A| = l, \\
\max\left\{\frac{b - \min_{k \in A}\log W_k}{\eta_1^A}, \frac{a +  \max_{j \in A^c}\log W_j}{\eta_0^A}\right\} + O(\sqrt{a \vee b}), & \hspace{-0.5em}|A| \in (l, u), \\
\max\left\{\frac{b - \min_{k \in A}\log W_k}{\eta_1^A}, \frac{d - \min_{(k, j) \in A\ \times A^c} \log (W_k / W_j)}{\eta_1^A + \eta_0^A}\right\} + O(\sqrt{b \vee d}), & \hspace{-0.5em}|A| = u.
\end{cases}
\] Given that the thresholds \(a, b, c, d\) are chosen to achieve FWE
control, we have \[
\lim_{\alpha,\beta \to 0} \frac{\mathbb E_A(T_{W,GI})}{L_A(\alpha, \beta; l, u)} \le 1. 
\] This concludes the proof.
\end{proof}

We now extend the high-dimensional analysis to the weighted
gap-intersection procedure. The logic parallels that of the fixed-\(m\)
case: asymptotic optimality is maintained if the second-order terms in
the ESS expansion, which depend on \(J\), are of a lower order than the
dominant \(\kappa\) term. Due to the composite nature of the stopping
time \(T_{W, GI}\), this must hold for each of its three components. The
most stringent requirement on the growth of \(J\) arises from the
boundary rules \(\tau_{1, W}\) and \(\tau_{3, W}\), which involve a
larger number of simultaneous random walk comparisons. This leads to
conditions for optimality that are analogous to those for the simpler
weighted gap procedure.

\begin{theorem}[Optimality in the \(J \to \infty\)
Regime]\protect\hypertarget{thm-wgi-ao-fixedw-j}{}\label{thm-wgi-ao-fixedw-j}

The weighted gap-intersection procedure is asymptotically optimal in the
regime where \(J, \kappa \to \infty\), where
\(\kappa = |\log (\alpha \wedge \beta)|\), if the following conditions
hold.

\begin{enumerate}
\def\labelenumi{\arabic{enumi}.}
\tightlist
\item
  \(J = o(\kappa^{1 / 4})\)
\item
  \(\max_{j \in [J]} \log W_j = o(\kappa)\)
\item
  \(- \min_{j \in [J]} \log W_j = o(\kappa)\)
\end{enumerate}

\end{theorem}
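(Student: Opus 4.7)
The plan is to mirror the strategy used in the proof of Theorem~\ref{thm-wgap-ao-fixedw-j}, extended to accommodate the three-case structure of the weighted gap-intersection procedure's stopping time. I would start by retaining the three upper bounds on $\mathbb{E}_A(T_{W,GI})$ derived in the proof of Theorem~\ref{thm-wgi-ao-fixedw-fixedj}, being careful to keep the $O(J^2 \sqrt{\cdot})$ remainder term explicit (in the fixed-$J$ proof the $J$-factor was absorbed, but here it must be tracked). Each case yields a leading maximum of ratios of the form (threshold $+$ weight adjustment)$/$(information rate), plus a square-root remainder scaled by $J^2$.

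Next, I would substitute the threshold choices from Proposition~\ref{prp-fwe-control-wgi}, writing each of $a, b, c, d$ as $|\log(\alpha/2)|$ or $|\log(\beta/2)|$ plus a weight-dependent term. The key simplification step is to bound each weight-dependent term using the elementary inequalities
\begin{align*}
\log\!\bigl(\textstyle\max_A \sum_{j \in A^c} W_j\bigr) &\le \log J + \max_{j} \log W_j, \\
\log\!\bigl(\textstyle\max_A \sum_{k \in A} W_k^{-1}\bigr) &\le \log J - \min_{j} \log W_j, \\
\log \mathcal C_W(m, J) &\le 2\log J + \max_{j}\log W_j - \min_{j}\log W_j,
\end{align*}
and analogously $|\log(W_k/W_j)| \le \max_j \log W_j - \min_j \log W_j$. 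Under condition~1, $\log J = O(\log \kappa) = o(\kappa)$, and under conditions~2 and~3 the extremal weight terms are each $o(\kappa)$, so every weight-dependent piece in the leading term is $o(\kappa)$. Hence each leading maximum reduces to $\kappa/(\eta_1^A + \eta_0^A)$, $\kappa/\eta_1^A$, $\kappa/\eta_0^A$, or a combination thereof, plus $o(\kappa)$.

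For the remainder term, I would argue that the quantity inside the square root is $\kappa + o(\kappa) = O(\kappa)$ under the three conditions, so the full remainder is $O(J^2 \sqrt{\kappa})$. Condition~1, $J = o(\kappa^{1/4})$, then gives $J^2 \sqrt{\kappa} = o(\kappa)$, so the remainder divided by $\kappa$ vanishes. Dividing both sides of the three-case bound by $L_A(\alpha,\beta; l, u)$ from \eqref{eq-lower-lu}, and noting that $L_A$ is itself of order $\kappa$ (since $|\log \alpha|, |\log \beta| \in [\kappa - o(\kappa), \kappa]$ as $\alpha \wedge \beta \to 0$), yields $\limsup_{J,\kappa \to \infty} \mathbb{E}_A(T_{W,GI})/L_A \le 1$. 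The matching lower bound follows directly from the information-theoretic lower bound inherited from \citet{song2017}, which does not involve the weights.

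The main obstacle will be bookkeeping rather than any deep difficulty: one must verify case by case that \emph{all} second-order weight contributions in the leading maxima collapse to $o(\kappa)$ simultaneously, and that the $J^2\sqrt{\kappa}$ remainder is truly negligible in each case, since the three component stopping times involve different numbers of simultaneous random-walk comparisons (up to $(l+1)(J-l)$ for $\tau_{1,W}$ and $(J-u+1)u$ for $\tau_{3,W}$, both bounded by $O(J^2)$). A subtle point worth flagging is the interplay between $\alpha$ and $\beta$ in $\kappa = |\log(\alpha \wedge \beta)|$: the ratio $\mathbb{E}_A(T_{W,GI})/L_A$ must be shown to tend to $1$ uniformly regardless of which error probability drives $\kappa$, which is handled by observing that the maximum in $L_A$ automatically selects the dominant term.
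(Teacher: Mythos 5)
Your proposal is correct and follows essentially the same route as the paper's proof: substitute the thresholds from Proposition~\ref{prp-fwe-control-wgi} into the three-case upper bound from Theorem~\ref{thm-wgi-ao-fixedw-fixedj}, bound each weight-dependent term by $\log J$ plus the extremal log-weights so that conditions 2--3 make it $o(\kappa)$, and use $J = o(\kappa^{1/4})$ to kill the $O(J^2\sqrt{\kappa})$ remainder before comparing to $L_A(\alpha,\beta;l,u)$. The elementary inequalities you list are exactly the ones the paper uses, and your attention to the differing numbers of simultaneous random-walk crossings across the three cases matches the paper's case-by-case treatment.
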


\begin{proof}[Proof of Theorem~\ref{thm-wgi-ao-fixedw-j}]
First consider the case \(|A| \in (l, u)\), and focus on the terms
\begin{align*}
b - \min_{k \in A}\log W_k
&\le |\log \alpha| + \log 2 + \log J  + \log \max_{(k, j) \in A \times A^c} \left(\frac{W_j}{W_k} \right) \\ 
a + \max_{j \in A^c} \log W_j 
&\le |\log \beta| + \log 2 + \log J  + \log \max_{(k, j) \in A \times A^c} \left(\frac{W_j}{W_k} \right).
\end{align*} Under condition 2 and 3,
\(\log \max_{(k, j) \in A \times A^c} \left({W_j} / {W_k}\right) = o(\kappa)\).
Under condition 1, \(\log J = o(\kappa)\). This implies that the
remainder term \[
R_W = O\left(J \cdot \sqrt{\max \left( \max_{k \in A} \{b - \log W_k\}, \max_{j \in A^c} \{a + \log W_j\} \right)}\right) = O(J \sqrt{\kappa}).
\] The ratio of  the leading term
\[
I_W = \max\left\{\frac{b - \min_{k \in A}\log W_k}{\eta_1^A}, \frac{a +  \max_{j \in A^c}\log W_j}{\eta_0^A}\right\} 
\] 
to the lower bound \(L_A(\alpha, \beta; l, u)\) is \[
\begin{aligned}
\frac{I_W}{L_A(\alpha, \beta; l, u)} 
&\:\le \max\left\{\frac{b - \min_{k \in A}\log W_k}{|\log\alpha|}, \frac{a + \max_{j \in A^c}\log W_j}{|\log\beta|}\right\} \\ 
\le&\: 1 + \max\left\{\frac{\log 2 + \log J + \log \frac{\max_{j \in A^c}  W_j}{\min_{k \in A} W_k}}{|\log\alpha|}, \frac{\log 2 + \log J + \log \frac{\max_{j \in A^c} W_j}{\min_{k \in A} W_k}}{|\log\beta|}\right\}, 
\end{aligned}
\] which converges to \(1\) as \(J, \kappa \to \infty\). Thus the
procedure is asymptotically optimal when \(|A| \in (l, u)\).

Now consider the case \(|A| = l\). The term
\(c - \min_{(k, j) \in A \times A^c} (W_k / W_j)\) simplifies to
\begin{multline*}
c - \min_{(k, j) \in A \times A^c} (W_k / W_j)
= |\log (\beta / 2)| + \log \mathcal C_W (l, J) + \max_{(k, j) \in A \times A^c} (W_j / W_k) \\
\le |\log (\beta / 2)|+ \log (J - l) + \log l + 2 \left[\log {\max_{j \in [J]} W_j} - \log{\min_{j \in [J]} W_k} \right]. 
\end{multline*} Under conditions 1-3,
\(c - \min_{(k, j) \in A \times A^c} (W_k / W_j) = O(\kappa)\). This
implies that the remainder term \[
\begin{aligned}
R_W 
=&\: O\left((l + 1)(J - l) \sqrt{\max\left(\max_{j \in A^c} \left\{{a + \log W_j}\right\}, \max_{(k, j) \in A \times A^c} \left\{c - \log (W_k / W_j)\right\}\right)}\right), 
\end{aligned}
\] is \(O(J^2 \sqrt{\kappa})\). The ratio of the leading term \[
I_W = \max \left\{\frac{a + \max_{j \in A^c} \log W_j}{\eta_0^A}, \frac{c - \min_{(k, j) \in A\ \times A^c} \log (W_k / W_j)}{\eta_1^A + \eta_0^A}\right\} 
\]  to the lower bound \(L_A(\alpha, \beta; l, u)\)
is \[
\begin{aligned}
\frac{I_W}{L_A(\alpha, \beta; l, u)} 
&\:\le \max\left\{\frac{c - \min_{(k, j) \in A \times A^c} (W_k / W_j)}{|\log\alpha|}, \frac{a + \max_{j \in A^c}\log W_j}{|\log\beta|}\right\} \\ 
\le&\: 1 + \max\left\{\frac{\log 2 + 2\log J + 2\log \frac{\max_{j \in A^c}  W_j}{\min_{k \in A} W_k}}{|\log\alpha|}, \frac{\log 2 + \log J + \log \frac{\max_{j \in A^c} W_j}{\min_{k \in A} W_k}}{|\log\beta|}\right\}, 
\end{aligned}
\] which converges to \(1\) as \(J, \kappa \to \infty\). Thus the
procedure is asymptotically optimal when \(|A| = l\). Similar arguments
can be used to prove the asymptotic optimality under \(|A| = u\). This
concludes the proof.
\end{proof}

\begin{refremark}[Case of Uniformly Bounded Weights]
A simple and practical sufficient condition for satisfying Conditions 2
and 3 is when the weights are uniformly bounded away from \(0\) and
\(\infty\). If there exist constants \(w_{\min}\) and \(w_{\max}\) such
that \(0 < w_{\min} \le W_j \le w_{\max} < \infty\) for all \(j\) as
\(J \to \infty\), then \(\max \log W_j\) and \(-\min \log W_j\) are
bounded by constants. In this scenario, the primary constraint for
maintaining asymptotic optimality is the condition on the growth rate of
\(J\), namely \(J = o(\kappa^{1 / 4})\).

\label{rem-bounded-weights-wgi-j}

\end{refremark}

\section{Regime of Random Weights}\label{sec-random-w}

In some applications, the weights themselves may not be fixed constants
but may be derived from external data, motivating their treatment as
random variables \citep[e.g.,][]{Genovese06,Perone04}. In this regime, we assume that a realization of the
weight vector \(\bm{W}\) is drawn from a known distribution, and
this realization is fixed before the sequential experiment begins. The
most rigorous approach is to demand that the FWE is controlled
conditionally, for every possible realization of \(\bm{W}\).

This requirement necessitates that the procedure's threshold be adaptive
to the realized weights. In the context of 
Proposition~\ref{prp-fwe-control-wgap} for the Weighted Gap Procedure with $m$ signals, for a given realization
\(\bm{W}\) we must set the threshold as a function
\[
c(\bm{W}) := \kappa + \log \mathcal{C}_W(m,J),
\] where \(\kappa = |\log\alpha|\). The goal is then to show that the
\emph{unconditional} expected stopping time, averaged over both the data
and the weights, remains asymptotically optimal. This leads to new
conditions on the moments of the weight distribution. First we focus the
analysis on the weighted gap procedure.

\begin{theorem}[Optimality with Random
Weights]\protect\hypertarget{thm-wgap-ao-w-j}{}\label{thm-wgap-ao-w-j}

Let the weight vector \(\bm{W}\) be drawn from a given distribution,
and consider the weighted gap procedure \((T_W, D_W)\) with the adaptive
threshold \(c(\bm{W}) = \kappa + \log \mathcal{C}_W(m, J)\), where
\(\kappa = |\log\alpha|\). The unconditional expected stopping time~\(\mathbb E(T_W)\) of
this procedure is asymptotically optimal in the
regime where \(J, \kappa \to \infty\) if the following conditions hold:

\begin{enumerate}
\def\labelenumi{\arabic{enumi}.}
\tightlist
\item
  \(J = o(\kappa^{1/4})\),
\item
  \(\mathbb{E}_{\bm{W}} \left[\log\left(\max_{j \in [J]} W_j\right)\right] = o(\kappa)\),
\item
  \(\mathbb{E}_{\bm{W}} \left[- \log\left(\min_{k \in [J]} W_k\right)\right] = o(\kappa)\).
\end{enumerate}

\end{theorem}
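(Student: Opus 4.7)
The plan is to reduce to the deterministic-weights regime by conditioning on the realized weight vector $\bm{W}$, applying the sample-path upper bound derived in the proof of Theorem~\ref{thm-wgap-ao-fixedw-j}, and then taking the outer expectation over $\bm{W}$. Since the adaptive threshold $c(\bm{W})=\kappa+\log\mathcal{C}_W(m,J)$ guarantees, via Proposition~\ref{prp-fwe-control-wgap}, that the conditional FWE is at most $\alpha$ for every realization of $\bm{W}$, the unconditional FWE is also at most $\alpha$, placing the procedure in $\Delta_{\alpha,m}$ and allowing us to inherit the classical lower bound $\kappa/(\eta_1^A+\eta_0^A)$ from \citet{song2017}.

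Conditional on $\bm{W}=\bm{w}$, the derivation in the proof of Theorem~\ref{thm-wgap-ao-fixedw-j} yields
\[
\mathbb{E}_A[T_W\mid \bm{W}=\bm{w}]\le \frac{\kappa+\xi(\bm{w})}{\eta_1^A+\eta_0^A} + O\!\left(m(J-m)\sqrt{\kappa+\xi(\bm{w})}\right),
\]
where $\xi(\bm{w}):=\log\mathcal{C}_W(m,J)+\log\max_{(k,j)\in A\times A^c}(w_j/w_k)$. Using the elementary bound
\[
\log\mathcal{C}_W(m,J)\le \log[m(J-m)]+\log\max_{j} W_j - \log\min_{k} W_k,
\]
together with the analogous bound $\log\max_{(k,j)}(W_j/W_k)\le \log\max_j W_j-\log\min_k W_k$, Conditions~2 and~3 give $\mathbb{E}_{\bm{W}}[\log\max_j W_j]=o(\kappa)$ and $\mathbb{E}_{\bm{W}}[-\log\min_k W_k]=o(\kappa)$, while Condition~1 forces $\log[m(J-m)]=O(\log J)=o(\kappa)$. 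Combining these gives $\mathbb{E}_{\bm{W}}[\xi(\bm{W})]=o(\kappa)$.

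For the square-root remainder I would invoke Jensen's inequality on the concave function $\sqrt{\cdot}$ to obtain
\[
\mathbb{E}_{\bm{W}}\sqrt{\kappa+\xi(\bm{W})}\le \sqrt{\kappa+\mathbb{E}_{\bm{W}}[\xi(\bm{W})]}=\sqrt{\kappa}\,(1+o(1)).
\]
Combined with $m(J-m)=O(J^2)=o(\kappa^{1/2})$ from Condition~1, the expected remainder is $O(J^2\sqrt{\kappa})=o(\kappa)$. Taking the outer expectation in the conditional bound and dividing by $\kappa/(\eta_1^A+\eta_0^A)$ yields
\[
\limsup_{J,\kappa\to\infty}\frac{\mathbb{E}(T_W)}{\kappa/(\eta_1^A+\eta_0^A)}\le 1,
\]
which together with the inherited lower bound establishes asymptotic optimality.

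The main obstacle is the interaction between the two layers of randomness: the data-level randomness requires a conditional argument per realization of $\bm{W}$, while the remainder term in that conditional bound is nonlinear in $\bm{W}$ through the square root. The Jensen step is precisely what lets the square-root nonlinearity be traded for a linear bound on $\mathbb{E}_{\bm{W}}[\xi(\bm{W})]$, and Conditions~2--3 are tight in this sense: a weight distribution with heavier tails on $\log W_j$ would invalidate this step and require a truncation argument on $\bm{W}$ to recover optimality.
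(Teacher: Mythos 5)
Your proposal is correct and follows essentially the same route as the paper: condition on $\bm{W}$, apply the deterministic-weight upper bound, pass to the unconditional expectation via iterated expectation with Jensen's inequality on the square-root remainder, and reduce Conditions 2--3 to the bound $\log \mathcal{C}_W(m,J) \le \log[m(J-m)] + \log\max_j W_j - \log\min_k W_k$. Your added remarks on inheriting the lower bound through conditional FWE control make explicit a step the paper leaves implicit, but the argument is the same.
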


\begin{proof}[Proof of Theorem~\ref{thm-wgap-ao-w-j}]
We begin with the upper bound to the expected stopping time, conditioned
on a fixed realization of weights \(\bm{W}\) and use iterated
expectation and Jensen's inequality to get an upper bound to the
unconditional expectation: \[
\begin{aligned}
\mathbb E_{\bm{W}} \left(\mathbb E_A (T_W \mid \bm{W})\right)
\le&\: \frac{\kappa + \mathbb E_{\bm{W}} \left(\log \mathcal C_W(m, J) + \log \frac{\max_{j \in A^c} W_j}{\min_{k \in A} W_k}\right)}{\eta_1^A + \eta_0^A} + \\
&\: O\left(m (J - m) \sqrt{\kappa + \mathbb E_{\bm{W}} \left(\log \mathcal C_W(m, J) + \log \frac{\max_{j \in A^c} W_j}{\min_{k \in A} W_k}\right)}\right). 
\end{aligned}
\] Thus, proving asymptotic optimality boils down to proving \[
\mathbb {E}_{\bm{W}} \left[\log \mathcal C_W (m, J) + \log \frac{\max_{j \in A^c} W_j}{\min_{k \in A} W_k}\right] = o(\kappa). 
\] The left-hand side can be simplified as 
\begin{multline*}
\mathbb {E}_{\bm{W}} \left[\log \mathcal C_W (m, J) + \log \frac{\max_{j \in A^c} W_j}{\min_{k \in A} W_k} \right] \\
\le \log(J - m) + \log m + 2 \: \mathbb{E}_{\bm{W}} \left[\log\left(\max_{j \in [J]} W_j\right) - \log \left(\min_{k \in [J]} W_k\right)\right]. 
\end{multline*} Thus the necessary condition simplifies to \[
\mathbb{E}_{\bm{W}} \left[\log\left(\max_{j \in [J]} W_j\right) - \log \left(\min_{k \in [J]} W_k\right)\right] = o(\kappa), 
\] which is satisfied under condition 2 and 3. This concludes the proof.
\end{proof}

\begin{refremark}[A Note on the Optimality Conditions]
Conditions 2 and 3 in Theorem~\ref{thm-wgap-ao-w-j} on the weight distributions provide formal
constraints on the informativeness and potential fallibility of the
prior information encoded in the weighting scheme. Condition 2,
\(\mathbb{E}_{\bm{W}}[\log(\max_{j \in [J]} W_j)] = o(\kappa)\), is a
safeguard against the erroneous up-weighting of true nulls. A large
weight on a null hypothesis effectively lowers the stopping boundary for
its LLR, increasing the risk of a premature false rejection based on
limited data. This condition ensures that the expected effect of the
most extreme up-weighting is asymptotically negligible compared to the
primary difficulty scale \(\kappa\). Symmetrically, condition~3,
\(\mathbb{E}_{\bm{W}}[- \log(\min_{j \in [J]} W_j)] = o(\kappa)\),
protects against the excessive down-weighting of true signals. A small
weight on a signal imposes a large additive penalty on its LLR,
potentially stalling its detection and inflating the expected stopping
time. This condition ensures that the expected penalty from the most
extreme down-weighting does not asymptotically dominate the procedure.

\label{rem-note-conditions}

\end{refremark}

\begin{refremark}[Examples of Admissible Weight Distributions]
Conditions 2 and 3 of Theorem~\ref{thm-wgap-ao-w-j} constrain the tail
moments of the log-weight distribution, and hold for
distributions whose extreme order statistics have expected logarithmic
growth that is sub-linear in \(\kappa\). This primarily rules out
distributions with exceptionally heavy tails in the log-domain.

A simple example satisfying these is a \emph{binary weighting scheme}, similar to that in
\citet{Genovese06}, where each weight is independently drawn from
\(\{W_L, W_H\}\) with \(0 < W_L \le W_H < \infty\). As \(J \to \infty\),
the expected maximum log-weight,
\(\mathbb{E}_{\bm{W}} [\log(\max_{j \in [J]} W_j)]\), converges
to the constant \(\log W_H\). Since any constant is \(o(\kappa)\), the
condition holds. A symmetric argument applies to the minimum.

The conditions also hold for many common continuous distributions. If
weights are drawn from a \emph{Lognormal distribution} such that
\(\log W_j \iid \mathcal{N}(\mu, \sigma^2)\), the expected maximum of
the log-weights is of order \(O(\sqrt{\log J})\). Similarly, if weights
are drawn from a \emph{Pareto distribution}, the expected log-maximum
grows as \(O(\log J)\). Both are admissible under condition 1.

\label{rem-wgap-w-dist}

\end{refremark}

These same principles extend directly to the more general weighted
gap-intersection procedure.

\begin{theorem}[Optimality of Weighted Gap-Intersection Procedure with
Random
Weights]\protect\hypertarget{thm-wgi-ao-w-j}{}\label{thm-wgi-ao-w-j}

Let the weight vector \(\bm{W}\) be drawn from a given
distribution, and consider the weighted gap-intersection procedure
\((T_{W, GI}, D_{W, GI})\) with the adaptive thresholds
\((a(\bm{W}), b(\bm{W}), c(\bm{W}), d(\bm{W}))\)
chosen according to Proposition~\ref{prp-fwe-control-wgi} for each
realization of $\bm{W}$. The unconditional expected stopping
time~\(\mathbb E(T_{W, GI})\) of this procedure is asymptotically
optimal in the regime where \(J, \kappa \to \infty\), where
\(\kappa = |\log (\alpha \wedge \beta)|\), if the following conditions
hold:

\begin{enumerate}
\def\labelenumi{\arabic{enumi}.}
\tightlist
\item
  \(J = o(\kappa^{1/4})\)
\item
  \(\mathbb E_{\bm{W}}[\log(\max_{j \in [J]} W_j)] = o(\kappa)\)
\item
  \(\mathbb E_{\bm{W}}[-\log(\min_{j \in [J]} W_j)] = o(\kappa)\)
\end{enumerate}

\end{theorem}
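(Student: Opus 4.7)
The plan is to replicate the strategy of Theorem~\ref{thm-wgap-ao-w-j}, applied separately to the three cases of the conditional bound derived in the proof of Theorem~\ref{thm-wgi-ao-fixedw-fixedj}. First I would condition on the realization $\bm{W}$ and invoke the corresponding case-dependent upper bound on $\mathbb{E}_A(T_{W,GI} \mid \bm{W})$. For $|A| \in (l, u)$ the leading term is
\[
\max\left\{\frac{b(\bm{W}) - \min_{k \in A}\log W_k}{\eta_1^A},\ \frac{a(\bm{W}) + \max_{j \in A^c}\log W_j}{\eta_0^A}\right\}
\]
with an $O(J\sqrt{a \vee b})$ remainder, and at $|A| = l$ (respectively $|A| = u$) the bound additionally involves $c(\bm{W})$ and $\log \mathcal{C}_W(l, J)$ (respectively $d(\bm{W})$ and $\log \mathcal{C}_W(u, J)$), with an $O(J^2\sqrt{\cdot})$ remainder. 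Then, by iterated expectation, $\mathbb{E}_A(T_{W,GI}) = \mathbb{E}_{\bm{W}}[\mathbb{E}_A(T_{W,GI} \mid \bm{W})]$, and Jensen's inequality applied to the concave function $\sqrt{\cdot}$ lets me push $\mathbb{E}_{\bm{W}}$ inside the square root in the remainder term.

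The next step is to show that the relevant random quantities inside each bound are $o(\kappa)$ in expectation. Using the crude estimates
\[
\log \mathcal{C}_W(m, J) \le \log J + \log m + \log \max_{j \in [J]} W_j - \log \min_{j \in [J]} W_j,
\]
together with $\max_{A} \sum_{j \in A^c} W_j \le J \max_j W_j$ and $\max_{A} \sum_{k \in A} W_k^{-1} \le J / \min_j W_j$, combined with condition~1 (so that $\log J = o(\kappa)$) and conditions~2--3, all adaptive threshold contributions and $\log \mathcal{C}_W(\cdot, J)$ terms have expectation $o(\kappa)$. The remainder then becomes $O(J^2 \sqrt{\kappa})/\kappa = o(1)$ under condition~1, mirroring the fixed-weight argument in Theorem~\ref{thm-wgi-ao-fixedw-j}. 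In each of the three cases, the ratio of the unconditional leading term to the matching piece of the lower bound $L_A(\alpha, \beta; l, u)$ in \eqref{eq-lower-lu} converges to $1$ as $J, \kappa \to \infty$.

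The main obstacle is bounding the unconditional expectations of the $\max$-over-pairs terms such as $\max_{(k,j) \in A \times A^c}\log(W_j / W_k)$ that appear in the boundary-case bounds. Since the index set has deterministic size at most $J^2$, I would upper-bound this by $\log \max_j W_j - \log \min_k W_k$ and transfer the expectation using conditions~2 and 3, at the cost of an additive factor that is absorbed harmlessly into the $O(J^2 \sqrt{\kappa})$ remainder. A minor consistency check, analogous to Remark~\ref{rem-note-conditions}, is that conditions~2 and 3 play the same dual role here as in Theorem~\ref{thm-wgap-ao-w-j}: condition~2 controls the expected price of mistakenly up-weighting a null through the $b$ and $d$ thresholds, while condition~3 controls the expected price of down-weighting a true signal through $a$ and $c$.
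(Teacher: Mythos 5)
Your proposal is correct and matches the paper's intent exactly: the paper omits this proof, stating only that it follows the reasoning of Theorems~\ref{thm-wgi-ao-fixedw-j} and \ref{thm-wgap-ao-w-j}, and your argument is precisely that combination (case-by-case conditional bounds from the gap-intersection analysis, then iterated expectation with Jensen's inequality and the crude $\log\mathcal{C}_W$ and max-ratio bounds to invoke conditions 1--3). No gaps beyond those already present in the paper's own deterministic-weight arguments.
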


The proof of this theorem  follows a similar line of
reasoning to that of Theorems~\ref{thm-wgi-ao-fixedw-j} and \ref{thm-wgap-ao-w-j}, and is thus ommitted.

\section{Simulation Study}\label{sec-simulation}

This section presents a simulation study conducted to investigate the finite-sample
performance of the weighted gap procedure. The primary goals are to
quantify the efficiency gains from incorporating informative weights, to
assess the procedure's robustness to misinformative weights, and to
examine its scalability as the number~$J$ of hypotheses grows.

\subsection{Simulation Setup}\label{simulation-setup}

We consider a Gaussian mean problem where for each stream \(j\), the
hypotheses are \(H_0^j: \mu_j = 0\) versus \(H_1^j: \mu_j = \mu\), with
observations drawn from a \(\mathcal{N}(\mu_j, 1)\) distribution. The
signal strength is fixed at \(\mu = 0.15\). The target Type I and
Type II error rates are set to \(\alpha = \beta = 0.05\). The total
number~$J$ of hypotheses  is varied from 200 to 400, while the
proportion of true signals is held constant at \(m / J = 0.1\). Each
configuration is evaluated over 10,000 Monte Carlo replications.

First, for each hypothesis \(j\), a binary ``guess'' variable
\(U_j \in \{0,1\}\) is generated, where \(U_j = 1\) represents a prior
belief that hypothesis \(j\) is a signal. The quality of these guesses
is controlled by an informativeness parameter \(\eta \ge 0\) for true signal set~$A$ as follows:
\[
\begin{aligned}
p_1 &= \mathbb{P}(U_j=1 \mid j \in A) = \frac{\eta \cdot (m/J)}{1 + (\eta-1)(m/J)}, \\
p_0 &= \mathbb{P}(U_j=1 \mid j \in A^c) = \frac{m/J}{1 + (\eta-1)(m/J)}.
\end{aligned}
\] Here, \(\eta = 1\) corresponds to uninformative guessing
(\(p_1 = p_0\)), \(\eta > 1\) corresponds to informative guessing
(\(p_1 > p_0\)), and \(\eta < 1\) corresponds to misinformative guessing
(\(p_1 < p_0\)).

The $U_j$ are incorporated into weights with a strength parameter~$r$ by setting preliminary weights to a high value
\(r \ge 1\) if \(U_j = 1\) and to a low value of 1 if \(U_j = 0\). These preliminary weights
are then normalized to ensure the mean weight is 1, yielding the final
weights \[
W_j = \frac{1 + (r - 1) U_j}{1 + (r - 1) \bar{U}},
\] where \(\bar{U} = J^{-1}\sum_{j = 1}^J U_j\). The unweighted case
corresponds to \(r = 1\), which yields \(W_j = 1\) for all \(j \in [J]\).

For each simulation run, the threshold \(c\) is calculated for the
realized weight vector~$\bm{W}$ using \eqref{eq-threshold-c}-\eqref{m.CW.def}. We compare the performance
across 4 distinct scenarios, summarized in Table
\ref{tab-sim-scenarios}. R code for this simulation study is available at \href{https://github.com/soumyabratabose/wgap-script-git}{https://github.com/soumyabratabose/wgap-script-git}. 

\begin{table}[h!]
\centering
\caption{Weighting Scenarios}
\label{tab-sim-scenarios}
\begin{tabular}{@{}lcc@{}}
\toprule
\textbf{Scenario Name} & \textbf{Informativeness ($\eta$)} & \textbf{Strength ($r$)} \\ \midrule
Unweighted & 1 & 1 \\
Informative & 20 & 5 \\
Misinformative & 0.05 & 5 \\
Noisy & 1 & 5 \\ 
\bottomrule
\end{tabular}
\end{table}

\subsection{Results}\label{results}

\begin{figure}
	\centering
	\scalebox{.75}{\input{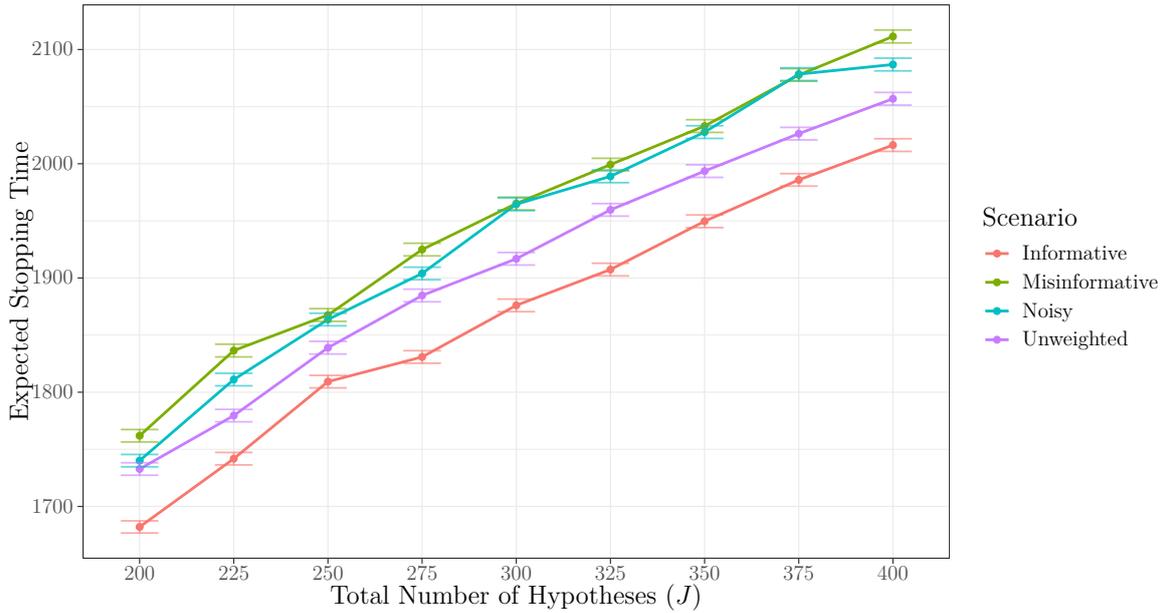}}
	\caption{Impact of Weight Informativeness on Expected Stopping Time}
	\label{plt-sim-res}
\end{figure}

The results are shown in Figure~\ref{plt-sim-res}. In the \textit{Informative} scenario, where the weights were positively correlated with the true signal status, the procedure yielded consistent and significant efficiency gains. We see that the ESS for this scenario (orange line) is uniformly lower than the \textit{Unweighted} baseline (purple line), for all tested values of \(J\). This confirms that incorporating reliable prior information directly translates into a faster experiment, as the ``head-start'' provided to true signals accelerates the decision-making process.  

The simulation also quantifies the potential costs of using poor-quality weights. In the \textit{Noisy} scenario, where the weights were uninformative but dispersed, the ESS (blue line) was consistently higher than the unweighted baseline. This increase in stopping time is an empirical demonstration of Remark~\ref{rem-wgap-weight-cost}. 

In the \textit{Misinformative} scenario, where the weights were systematically misleading, the procedure (green line) resulted in the poorest performance. This demonstrates the cost of incorporating actively incorrect prior information, which works directly against the data-driven evidence and prolongs the time to a decision.

In summary, the simulation results empirically validate our theoretical framework. The weighted gap procedure effectively leverages prior information to reduce experimental duration while suffering a predictable and quantifiable performance cost when the prior information is uninformative or misleading. 

\section{Conclusion}\label{sec-conclusion}

We have extended the theory of asymptotically optimal sequential multiple testing to the ``non-exchangeable'' setting where possible differential weights are associated with the null hypotheses. By constructing procedures based on a weighted log-likelihood ratio, we have proven that first-order asymptotic optimality is preserved. This  demonstrates that the fundamental efficiency limits of sequential testing are unaffected by the introduction of a hypothesis-specific weighting scheme.

Our analysis in the high-dimensional \(J \to \infty\) regime provides interpretable conditions on the growth rate of \(J\) and the moments of the log-weight distribution required to maintain optimality. These conditions formally constrain the permissible dispersion of prior beliefs relative to the error tolerance. In conclusion, our work provides a complete asymptotic theory for a new class of weighted sequential procedures, broadening the applicability of the optimality framework in sequential analysis.

Here we have focused on Type~I and II familywise error rates, but these results can be extended immediately to the false discovery rate (FDR) of \citet{Benjamini95}, the positive false discovery rate (pFDR), and other multiple testing error metrics using the method of \cite{He2020} since these metrics are bounded between constant multiples of FWE. For the non-asymptotic error control rates in Section~\ref{sec-methods}, the thresholds are modified by the bounding constants, and the asymptotic optimality results in Sections~\ref{sec-ao}-\ref{sec-random-w} remain unchanged.




\appendix

\singlespacing

\setstretch{2}

\section{Appendix}\label{sec-appendix}

\subsection{Proof of Proposition \ref{prp-fwe-control-wgap}} \label{sec-fwe-wgap-proof}

Let the error event be \(E = \left\{D_W \ne A\right\}\). This occurs if
and only if there exists at least one \(j \in A^c\) such that
\(j \in D_W\) and similarly at least one \(k \in A\) such that
\(k \notin D_W\). A necessary condition for this is
\(\lambda^j_W (T_W) \ge \lambda_W^k (T_W)\). Let \(E_{j, k}\) be the
event of such mis-ordering, then \[
E = \bigcup_{\substack{j \in A^c \\ k \in A}} E_{j, k} \quad \implies \quad \mathbb{P}_A (E) \le \sum_{j \in A^c} \sum_{k \in A} \mathbb{P}_A (E_{j, k}) .
\] Now, we will try to bound \(\mathbb{P}_A (E_{j, k})\). \(\tilde E_{j, k}\) implies \(E_{j, k}\) where \(\tilde E_{j, k}\) denotes the event
\begin{equation*}
\lambda^j_W (T_W) - \lambda_W^k(T_W) \ge c \implies \lambda^j (T_W) - \lambda^k(T_W) \ge c - \log\left(\frac{W_j}{W_k}\right) .   
\end{equation*}
Consider an alternative
hypothesis config \(C = A \cup \{j\} \backslash \{k\}\). The LLR for
testing null \(A\) versus alternative \(C\) at time \(n\) is \(
\lambda^{A, C} (n) = \lambda^j(n) - \lambda^k(n)\). Using Wald's identity, we have \[
\begin{aligned}
\mathbb{P}_A (\tilde{E}_{j, k}) 
= \mathbb{E}_C \left[\exp(- \lambda^{A, C} (T_W)) \boldsymbol{1}(\tilde E_{j, k})\right] 
\le \exp(-c) \frac{W_j}{W_k}. 
\end{aligned}
\] Thus we have \[
\mathbb{P}_A(D_W \ne A) 
\le \sum_{j \in A^c} \sum_{k \in A} \mathbb{P}_A (\tilde E_{j, k}) 
\le \exp(- c) \left(\sum_{j \in A^c} W_j\right) \left(\sum_{k \in A} \frac{1}{W_k}\right) .
\] To ensure that
\(\sup_{\left|A\right| = m} \mathbb{P}_A (D_W \ne A) \le \alpha\) ,
we set \[
\begin{aligned}
&\: \sup_{\left|A\right| = m} \left[ \exp(- c) \left(\sum_{j \in A^c} W_j\right) \left(\sum_{k \in A} \frac{1}{W_k}\right) \right] 
\le \alpha \\
\implies &\: c \ge \left|\log\alpha\right| + \max_{\left|A\right| = m} \left\{\log \left(\sum_{j \in A^c} W_j\right) \left(\sum_{k \in A} \frac{1}{W_k}\right)\right\}
\end{aligned}
\]

\subsection{Proof of Proposition \ref{prp-fwe-control-wgi}} \label{sec-fwe-wgi-proof}

The Type~I FWE is defined as
\(\operatorname{FWE}_{1, A} = \mathbb{P}_A \left(\exists j \in A^c : j \in D_{W, GI}\right)\).
This event could happen in two primary ways.

\begin{enumerate}
\def\labelenumi{\arabic{enumi}.}
\tightlist
\item
  For \(j \in A^c\), the weighted LLR becomes strongly positive
  \(\lambda_W^j \ge b\) (via \(\tau_{2, W}\) or, \(\tau_{3, W}\)).
\item
  For \(k \in A\) and \(j \in A^c\), the difference exceeds a \(c\)-gap
  \(\lambda_W^j - \lambda_W^k \ge c\) (via \(\tau_{1, W}\)).
\end{enumerate}

This implies \[
\operatorname{FWE}_{1, A} \le \sum_{j \in A^c} \mathbb{P}_A (\lambda_W^j \ge b) + \boldsymbol{1} (|A| = l)\sum_{\substack{j \in A^c \\ k \in A}} \mathbb{P}_A (\lambda_W^j - \lambda_W^k \ge c)
\]

Using change-of-measure argument, we have \[
\operatorname{FWE}_{1, A} \le \exp(-b) \left(\sum_{j \in A^c} W_j\right) + \boldsymbol 1 (\left|A\right| = l)\exp(-c) \left( \sum_{j \in A^c} W_j \right) \left( \sum_{k \in A} W_k^{-1} \right)
\]

Using similar argument for \(\operatorname{FWE}_{2, A} = \mathbb{P}_A (\exists k \in A : k \notin D_{W, GI})\) we also have \[
\begin{aligned}
\operatorname{FWE}_{2, A} 
\le\exp(- a) \left(\sum_{k \in A} W_k^{-1}\right) +
\boldsymbol 1 (\left|A\right| = u)\exp(- d) \left( \sum_{j \in A^c} W_j \right) \left( \sum_{k \in A} W_k^{-1} \right)
\end{aligned}
\]

The thresholds \(a, b, c, d\) mentioned in \eqref{eq-wgi-threshold-abcd} satisfy these two bounds.

\end{document}